\newcommand{\blind}{0}
\begin{document}
\def\spacingset#1{\renewcommand{\baselinestretch}%
			{#1}\small\normalsize} \spacingset{1}
			
			\title{\bf Measuring sampling plan utility in post-marketing surveillance of medical products}
   \if0\blind {
   \author{Eugene Wickett$^a \text{ } (\text{corr. author; }  \href{mailto:eugenewickett@gmail.com}{eugenewickett@gmail.com})$, \\ Matthew Plumlee$^b$, Karen Smilowitz$^a$, Souly Phanouvong$^c$ and \\ Timothy Nwogu$^c$ \\
			$^a$ Industrial Engineering and Management Sciences, \\ Northwestern University,  Evanston, Illinois \\
   $^b$ Amazon, Evanston, Illinois \\
   $^c$ Promoting the Quality of Medicines Plus (PQM+) Program, \\ U.S. Pharmacopeial Convention (USP), Rockville, Maryland }
   }\fi
   \if1\blind {
   \author{Author information is purposely removed for review}
   } \fi

    \date{}
   
    \maketitle

\begin{abstract}
Ensuring product quality is critical to combating the global challenge of substandard and falsified medical products.
Post-marketing surveillance is a central quality-assurance activity in which products from consumer-facing locations are collected and tested.
Regulators in low-resource settings use post-marketing surveillance to evaluate product quality across locations and determine corrective actions.
Part of post-marketing surveillance is developing a sampling plan, which specifies where to test and the number of tests to conduct at a location.
With limited resources, it is important to base decisions on the utility of the samples tested.
We propose a Bayesian approach to generate a comprehensive utility metric for sampling plans.
This sampling plan utility integrates regulatory risk assessments with prior testing data, available supply-chain information, and valuations of regulatory objectives.
We develop an efficient method for calculating sampling plan utility.
We illustrate the value of the utility metric with a case study based on de-identified post-marketing surveillance data from a low-resource setting.
\end{abstract}

\noindent
{\it Keywords:} \emph{Substandard and falsified medical products}, \emph{Post-marketing surveillance}, \emph{Bayesian experimental design}, \emph{Network detection}


\spacingset{1.8}

\newpage
\section{Introduction}
Substandard and falsified medical products (SFPs) cause substantial morbidity and mortality globally according to the World Health Organization (WHO).
Low- and middle-income countries bear the greatest SFP burdens \parencite{who2017}.
Post-marketing surveillance (PMS) is a key regulatory activity used in these countries.
The aim of PMS is to assess product quality through testing of samples at consumer-facing locations of the supply chain \parencite{nkansah2018,who2017-2}.
Regulators use analysis of testing results to determine corrective actions such as investigations, warnings and recalls.
Limited PMS budgets make the choice of sampling locations a vital decision.
Testing results from one round of PMS can feed the decision for sampling locations in the future. 
Current PMS practice does not feature a metric for the marginal value of future tests.
This paper considers how supply-chain information and prior PMS results can be leveraged to establish the value of additional tests. 
This value captures the ability of tests to improve inference of SFP prevalence that informs the deployment of corrective actions.
We provide methods for efficiently calculating this value.
Our case study demonstrates that even a simple greedy application of the value of further testing can help regulators improve the use of PMS budgets.
Generating the value of further tests thus lays a foundation for more advanced approaches to determining a set of sampling locations.

The decisions of which locations to visit, the visiting order of these locations, and the number of tests to collect from each location is called the PMS \textit{sampling plan}.
Regulators seek sampling plans that maximize the prospective value of data produced by a sampling plan, which we refer to as the plan \textit{utility}.
Current practice for sampling plan development centers on a risk-based framework \parencite{nkansah2018}.
Regulators assess the SFP risk of locations as a function of geography, disease burden, and other characteristics.
These assessments, as well as operational requirements, desired regional variety, or other factors, then inform the construction of sampling plans.
These plans focus testing at high-risk locations.
However, this approach has limitations.
First, assessments of SFP risk may not explicitly account for important elements of prior data.
Some locations may have been tested previously, and testing at alternate locations yields different inference about upstream supply-chain conditions.
Knowing five tests from one location are as valuable as twenty tests from another location, for instance, can improve plan design.
Further, establishing a plan's utility is not straightforward.
The objective of PMS is effectively deploying corrective actions, and not necessarily maximizing SFP detection.
A utility measuring the likelihood of SFP detection is thus insufficient.
Determining future actions requires an analysis of prospective testing results prior to collection.
How these prospective results tie to regulatory decisions must be appropriately characterized to capture regulatory aims.
For example, overestimating the SFP rate at one location may be preferable to underestimating the rate at another location.
Lastly, as established in {\citet{wickett2023}}, analysis of utility cannot consider tests at different locations independently.
Supply-chain information can impact inference of product quality across all tested and non-tested locations.
In this paper, we show how a Bayesian approach can integrate regulatory objectives, assessments of SFP risk, past testing data, and available supply-chain information in a single utility measure.

Any location in the supply chain can be a source of SFPs \parencite{pisani2019, who2017-2,newton2011, suleman2014quality}.
Manufacturers may produce products of poor quality, distributors may substitute placebos for authentic products, or shops may store products at unsuitable temperatures.
Each supply-chain location is associated with an SFP rate, defined as the proportion of quality products passing through that location that become substandard or falsified.
Regulators continuously update estimates of these rates and deploy corrective actions accordingly.
This paper, building on the Bayesian inference approach of \citet{wickett2023}, proposes an innovative measure of the value of new samples using supply-chain information and prior testing results.
This measure can inform the locations to test that produce the data most beneficial to decision-making.

The approach to measuring sampling plan utility proposed in this paper, which grew from a partnership between operations researchers and regulatory practitioners, broadens the connection between PMS data and budgets.
Existing approaches, such as that of \citet{newton2009}, specify the sample sizes needed to form sufficient confidence intervals for SFP rates.
These intervals are compared with tolerance thresholds, where intervals exceeding thresholds trigger corrective actions.
Under constrained resources, these samples sizes may be infeasible when considered across locations.
A Bayesian, utility-based approach instead provides the value of any plan for deploying future actions, tying budgets to decisions.  
Plan utility is comprised of two main elements: the value of an estimate relative to the latent truth, called the \textit{score}, and the importance of the latent truth for regulatory objectives, called the \textit{weight}.
The score measures the deviation between an estimate and the true SFP rate, and accounts for a difference in hazard between overestimation and underestimation.
The weight captures regulator prioritization of ranges of SFP rates.
For example, SFP rates that sufficiently exceed the tolerance threshold do not need to be estimated with high accuracy: it is clear corrective actions are needed.
The product of the score and weight is the \textit{loss}, the weighted penalty for the difference between the estimate and reality when accounting for regulatory objectives.
We propose a utility equal to the expected loss across all locations.
Plans with high utility are likely to produce strong estimates where regulatory objectives are impacted.
Importantly, these valuations do not impose a sample-size requirement; any sampling plan has an associated utility.
Regulators can thus focus on increasing the value of available resources.
Furthermore, we provide methods for efficient utility estimation.
These methods can be embedded in approaches to solving the larger problem of identifying high-utility sampling plans.

This approach to estimating sampling plan utility can be most beneficial in low-resource settings.
Regulators in these settings have limited capacities for the procurement, transport, and testing of samples.
However, regulators are responsible for ensuring quality for an array of medical products.
Understanding the marginal utility of sampling plans as PMS budgets change can improve operational planning.
For instance, incurring the marginal costs associated with sampling from a remote area may be inadvisable if more utility can be obtained by redirecting those resources to additional sampling from nearby regions.
Furthermore, obtaining resources for many surveillance activities depends on providing justification to funding sources, such as government bodies or nonprofit organizations.
Measurements of utility can facilitate these justifications.
Additionally, methods that are not computationally intensive are essential in low-resource settings.
We leverage the binary, pass-fail nature of much PMS testing to develop an efficient method for calculating sampling plan utility.
This method facilitates plan comparison and means of generating plans, and thus lays a foundation for more advanced approaches to determining a set of sampling locations.
A case study in Section \ref{sec:casestudy}, using de-identified PMS data from a low-resource setting, demonstrates the value of the utility metric.

\section{Literature review} \label{subsec:litreview}
Current PMS sampling plan methodology relies on assessments of SFP risk.
The risk-based approach of \citet{nkansah2018} is valuable for low-resource settings, where regulators often lack the resources necessary for statistically sufficient testing of all products, as proposed in \citet{newton2009}.
Focusing resources at high-risk areas helps address this scarcity.
However, this approach does not explicitly integrate prior testing results and available supply-chain information into sampling plan formation.
Previous work has demonstrated the value of analyzing supply chains in understanding SFP incidence.
For example, the qualitative analysis of \citet{pisani2019} showed how incentives across the supply chain interact with government policy to encourage or discourage SFPs.
\citet{wickett2023} incorporated supply-chain information through Bayesian inference to better understand SFP sources.
This inference uses testing results and supply-chain information for a fixed sampling plan.
We offer a utility-based method for developing sampling plans that builds on this inference framework while integrating the SFP-risk assessments of \citet{nkansah2018}.

This method adds to a literature of settings using Bayesian experimental design.
A design is set of input parameter choices for an experiment; see a review in \citet{chaloner1995}.
Bayesian experimental design suits settings involving binary, count, or categorical data where data collection is expensive, such as PMS.
The loss determines the cost of a decision made from an experiment with respect to the latent truth.
\citeauthor{chaloner1995} stressed the importance of constructing a loss matching experimental aims.
In Section \ref{sec:designUtil}, we propose such a loss for PMS.

Bayesian experimental design has been applied to a variety of problems using different losses or utilities, where a utility is an integration of the loss over a distribution of key parameters.
\citet{lookman2019}, for example, used an expected improvement utility for materials discovery. 
This utility fits settings where the goal is to maximize a response: the best designs balance uncertainty with the prospect for response maximization.
\citeauthor{lookman2019} aimed to identify materials maximizing the level of different physical attributes.
This contrasts with PMS, where the objective is an effective deployment of corrective actions, which may not correspond with heightened SFP detection.
\citet{phillips2021} used a measure of the distance between the prior and posterior, called the Shannon information, for key parameters in nuclear physics experiments.
The aim of these experiments is to tighten the feasible constraint range for modeled parameters or constants.
This relates to PMS decision thresholds: corrective actions are deployed once the credible interval for a location's associated SFP rate exceeds the threshold.
\citet{morgan2020} studied parameters of carbon capture systems, and discussed how loss specifications might change as experimentation continues and objectives evolve.
Initial experiments may target uncertainty across input ranges, while later experiments focus data collection at inputs deemed crucial for program objectives.
Similar dynamism exists in PMS settings, as discussed in Section \ref{subsec:defLoss}: initial iterations provide broad assessments of SFP rates while later iterations focus on distributing corrective actions.

Improving the understanding of quality issues through PMS can draw on previous work in auditing.
Auditing problems usually consider game-theoretic settings where a buyer's inspection policies elicit reactions by suppliers: see
\citet{dawande2020} for a review of auditing problems in supply chains.
\citet{zhang2022,huang2022} considered violations of social responsibility, e.g., abusive labor practices, at suppliers for a buyer who could be impacted by such violations.
The buyer decides where to audit and how much effort to expend.
These papers identified optimal auditing policies for different contexts within two- or three-echelon supply chains.
This setting is similar to PMS in that the buyer chooses locations to audit and corrective actions to deploy.
However, the focus is modeling incentives and violations as a function of these audits, rather than improving detection.

Identifying a utility-maximizing sampling plan can be modeled as a variant of the orienteering problem \parencite{tsiligirides1984}.
Orienteering problems seek a utility-maximizing tour through a subset of locations, subject to budget constraints.
The utility of an orienteering problem is usually additive across visited locations, representing some reward for visiting locations.
However, the value of testing data in PMS is not independent across locations. 
Locations share a connected supply chain, and thus information about one location affects inference at another location.
This article considers calculating a sampling plan utility that accounts for inference of quality problems across the supply chain.
In ongoing work, we develop approaches to solving a formulation for sampling plan optimization based on the orienteering problem.
This formulation requires a way to evaluate prospective sampling plans.

Network detection problems use sensors throughout a network to either discover anomalies or learn network-based attributes.
\citet{anzoom2021} reviewed approaches for disrupting illicit networks.
They labeled approaches as having either a network or a supply-chain view.
Network views consider general collections of nodes and links; supply-chain views distinguish nodes by their role.
Our work carries the supply-chain view: nodes are distinguished in that only consumer-facing nodes are tested.
We target learning SFP rates for improved decision-making; the work reviewed in \citeauthor{anzoom2021} generally sought to maximize detection.
For instance, \citet{triepels2018} built a Bayesian network model for detecting document fraud in international shipping.
Bayesian networks are systems of structured variable dependencies; these dependencies yield conditional probabilities for variables that require significantly less data to estimate than amassing all possible combinations of variables.
Bayesian networks are beneficial in settings with variable structure like international shipping, where different types of goods are frequently shipped together to similar places.
\citeauthor{triepels2018} formed posterior likelihoods of the presence or omission of different goods on shipping itineraries; the presence of low-likelihood goods or the omission of high-likelihood goods, relative to established thresholds, triggered inspection flags.
Similar decision thresholds exist for PMS in low-resource settings, where challenges exceed resources and some SFP risk must often be tolerated.

Sensor location problems consider how best to deploy limited measurement capacity on a network to learn an important network attribute.
These problems parallel test deployment at a subset of nodes in PMS.
For example, \citet{yang1998} and \citet{bianco1996} considered sensor locations on a traffic network for learning the origin-destination matrix governing the network.
Origin-destination matrices indicate the traffic between any two nodes.
\citet{yang1998} identified the sensor location subsets of a given size that minimized the worst-case error.
\citet{bianco1996} determined lower and upper bounds on the minimal cost of sensor locations that ensure learning traffic flows.

The rest of this paper is as follows.
Section \ref{sec:samplingDesigns} describes key elements of sampling plans in PMS and defines sampling plan utility.
Section \ref{sec:designUtil} proposes methods of building and applying a loss for SFP-rate estimates that tie key regulatory objectives to sampling plan utility.
Section \ref{sec:casestudy} uses these methods with a 2021 set of de-identified PMS data and demonstrates the application of utility to the generation of favorable sampling plans.
Section \ref{sec:discussion} summarizes the advantages of plan utility.

\section{Sampling plans in supply chains} \label{sec:samplingDesigns}
In this section, we describe sampling plans and a proposed utility metric that ties data collection to the effective deployment of corrective actions.

\subsection{Overview of post-marketing surveillance} \label{subsec:pmsIter}

PMS is an iterative process of data collection and analysis, centered on WHO's framework of prevention, detection, and response \parencite{who2017-2}.
This process is usually conducted for a single class of medical products.
For instance, elevated malaria incidence, or interest in malarial parasite resistance \parencite{chikowe2015}, may motivate a PMS iteration for anti-malarial products.
The availability of procurement, transportation, and testing resources is subject to funding and approval procedures. 
PMS data are often produced in batches as these procedures are completed.

Figure \ref{fig:PMSiteration} illustrates the iterative PMS process.
Step 1 is forming the sampling plan.
Current practice centers on the Medicines Risk-based Surveillance framework and web tool  \parencite{nkansah2018}.
Regulators assess locations' SFP risk as a function of geography, disease burden, past corrective actions, and other inputs.
These assessments inform choices of where to sample.
In our experience in low-resource settings, a plan may comprise collection of a few dozen to a few thousand samples for a single PMS iteration.

\begin{figure}[hbt]
    \centering
    \begin{localsize}{11}

\usetikzlibrary{shapes.geometric}
\begin{tikzpicture}[
squarednode/.style={rectangle, draw=gray!60, fill=gray!25, very thick,minimum height = 0.9cm,text width=7.1cm, align=left},
textnode/.style={rectangle, draw=gray!50, fill=magenta!0, dashed, thick,  minimum height = 1.0cm, text width=5.6cm, align=left},
node distance=10mm
]

\coordinate (textorig) at (3.3, -1.8);
\coordinate (arrowend) at (0.0, -1.8);

\node[squarednode](b1){\footnotesize 1. \textbf{Form sampling plan}: Determine how many samples to obtain from each consumer-facing location, using assessments and available data};
\node[squarednode](b2)[right= 1.8cm of b1] {\footnotesize 2. \textbf{Procure samples}: Visit locations and obtain samples according to the sampling plan};
\node[squarednode](b3)[below = 2.5cm of b2]  {\footnotesize 3. \textbf{Conduct testing}: Evaluate samples' quality with chosen testing method};
\node[squarednode](b4)[left = 1.8cm of b3]{\footnotesize 4. \textbf{Analyze results}: Use testing data to form estimates of SFP rates at locations in the supply chain and apply corrective actions};
\node[textnode](b5)  at (textorig) {\footnotesize \textit{This paper}: Use supply-chain information and prior testing results to define the utility of a given sampling plan};

\draw[line width=0.6mm,->,dashed,gray!50] (b5) -- node[below=3ex]{} (arrowend) ;

\draw[line width=0.7mm, ->] (b1) to  node[pos=0.85,left]{} (b2);
\draw[line width=0.7mm, ->] (b2) to  node[pos=0.85,left]{} (b3);
\draw[line width=0.7mm, ->] (b3) to  node[pos=0.85,left]{} (b4);
\draw[line width=0.7mm, ->] (b4) to  node[pos=0.85,left]{} (b1);

\end{tikzpicture}

\end{localsize}
\caption[PMS iterative process]{{PMS iterative process for a designated product class.}}
\label{fig:PMSiteration}
\end{figure}
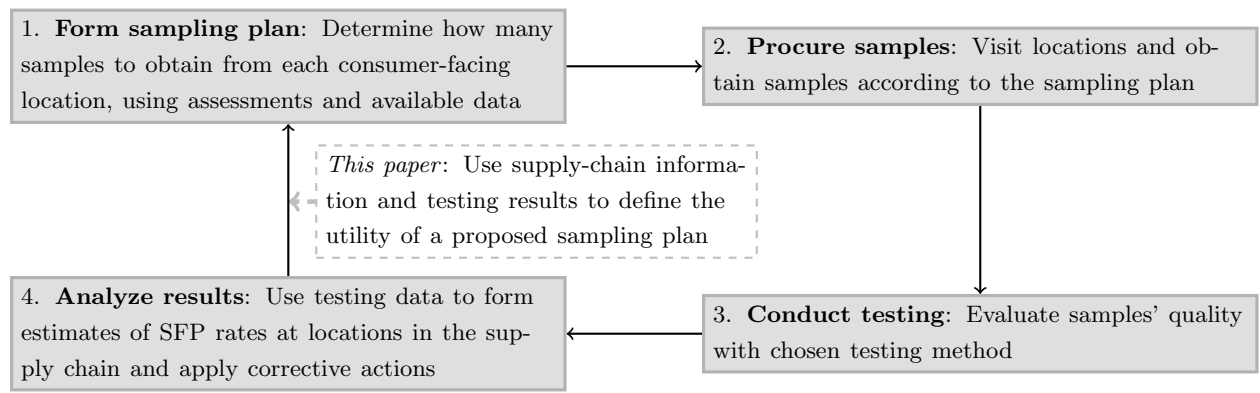

Step 2 is sample procurement, where locations designated by the plan are visited.
Regulators may procure any available product of the designated class, or regulators may choose among products that reflect different supply-chain paths.
Each product has an associated supply-chain path that is a list of locations visited by a product.
Frequently, the full path is unavailable and only partial information such as a manufacturer or distributor label can be accessed.
With increased use of track-and-trace systems, most or all of a sample's supply-chain path will become known \parencite{rotunno2014}.
We assume that each product has a label for the visited location and one upstream location, and products are randomly sampled across paths.

Step 3 is testing samples against product registration specifications.
\citet{kovacs2014} described a variety of testing methods used in PMS, ranging from visual inspection to X-ray diffraction analysis.
Some testing methods provide an array of measurements for the sample, such as the amount of active pharmaceutical ingredient or the amount of impurities.
We consider binary pass/fail determinants of sample quality, as reflected in common PMS data sets like the Medicines Quality Database \parencite{mqd2021}.
These pass/fail measurements align with the screening second level of the three-level inspection approach of \citet{pribluda2014}, where the first level is physical and visual inspection and the third level is a complete pharmacopeial assessment.
While we assume perfect accuracy of pass/fail detection, imperfect testing accuracy can be easily incorporated.
Each record in a PMS data set then contains a binary test result and associated metadata, including available supply-chain information such as procurement location or manufacturer.
Practically, there may be sample attrition between procurement and testing: samples may be damaged or lost, or testing materials may be degraded or unavailable.
However, for our modeling we assume there is no attrition between procurement and testing.
Thus, ``samples'' and ``tests'' will often be used interchangeably in this paper.

Step 4 is analysis.
Regulators use PMS with two related, yet distinct, objectives: a \textit{classification} objective or an \textit{assessment} objective.
The classification objective categorizes locations as either high or low SFP risk.  
Corrective actions are deployed at high-risk locations.
The assessment objective seeks accurate estimation of the SFP rates associated with locations.
Classification may fit a context of deploying a limited number of process or operational inspections.
Assessment may fit a context where regulators seek inference of SFP rates in the supply chain, such as a setting of studying a new or previously untested class of products.
This inference may then inform broad corrective actions, such as changes to registration procedures, or the decision to conduct further surveillance.
After analysis, regulators apply any corrective actions and iterate to Step 1.

This paper extends the analysis stage by considering how testing results affect sampling plan design for the next PMS iteration.
Our aim is to devise a utility metric capturing a plan's ability to improve decision-making.
For instance, allocating samples to a province with a low associated SFP rate may be helpful if that province is connected to key distributors for which there is little information.

\subsection{Inference in post-marketing surveillance}
\label{subsec:inferenceinPMS}
Regulators in low-resource settings generally have access to limited supply-chain information at sample collection, such as one or two labels from upstream supply locations.
The example in Figure \ref{fig:twoSCechelons} illustrates the approach of \citet{wickett2023} to leverage limited supply-chain information to improve inference of SFP rates.
A \textit{node} is a location within an echelon. 
A sample's \textit{trace} is the supply-chain information associated with that sample.
\textit{Test nodes} are consumer-facing locations, and \textit{supply nodes} are locations from one upstream echelon.
The number below each test node $\tnPoint$, $\allocVec_\tnPoint$, is the number of samples obtained from $\tnPoint$.
The fraction on each adjacent edge is the number of detected SFPs over the number of samples from that trace.
For example, 8 of 11 samples obtained from Test Node 4 were sourced from Supply Node 1, and 2 of these 8 samples were SFPs.
The principal challenge is that a product can become substandard or falsified at any supply-chain location, but testing in PMS is only done at test nodes.

\begin{figure}[ht]
    \centering
    \resizebox{.75\linewidth}{!}{ 

\usetikzlibrary{shapes.geometric}
\begin{tikzpicture}[
squarednode/.style={rectangle, draw=blue!60, fill=blue!25, very thick,minimum height = 0.9cm,text width=2.5cm, align=center},
roundnode/.style={ellipse, draw=magenta!60, fill=magenta!25, very thick,  minimum height = 1.5cm, text width=1.1cm, align=center},
node distance=10mm
]
\node[squarednode](i1){\small Supply Node 1};
\node[squarednode](i2)[right = 4.4cm of i1]  {\small Supply Node 2};
\node[roundnode, label=below:{\large $\allocVec_1=12$}](o1)[below left=4.2cm and -0.4cm of i1] { Test Node 1};
\node[roundnode, label=below:{\large $\allocVec_2=3$}](o2)[below right=4.2cm and -0.4cm of i1] { Test Node 2};
\node[roundnode, label=below:{\large $\allocVec_3=7$}](o3)[below left=4.2cm and -0.4cm of i2] { Test Node 3};
\node[roundnode, label=below:{\large $\allocVec_4=11$}](o4)[below right=4.2cm and -0.4cm of i2] { Test Node 4};

\draw[thick,-] (i1) to  node[pos=0.85, left] {{\Large $\frac{3}{7}$}} (o1) ;
\draw[thick,-] (i1) to  node[pos=0.88, right] {\Large{$\frac{0}{0}$}} (o2);
\draw[thick,-] (i1) to  node[pos=0.98, above] {\Large{$\frac{0}{3}$}} (o3);
\draw[thick,-] (i1) to  node[pos=0.98, above] {\Large{$\frac{2}{8}$}} (o4);

\draw[thick,-] (i2) to node[pos=0.98, above] {\Large{$\frac{1}{5}$}} (o1);
\draw[thick,-] (i2) to node[pos=0.98, above] {\Large{$\frac{0}{3}$}} (o2);
\draw[thick,-] (i2) to node[pos=0.88, left] {\Large{$\frac{0}{4}$}} (o3);
\draw[thick,-] (i2) to node[pos=0.87, right] {\Large{$\frac{1}{3}$}} (o4);;

\node[draw,left = 1.5cm of o1,minimum height=1.1cm,minimum width=2.5cm] (lgd) {$\frac{\text{\small positive tests}}{\text{\small total samples}}$};
\draw[dashed,->] (lgd) -- ++ (0.0,0.55) |-  (-3.2,-3.9) ;

\node[rectangle, thick, draw=lightgray, minimum width = 103mm,text width=25mm, align=center,  minimum height = 0.9cm, anchor=west, dash pattern=on 0.2cm off 0.1cm] (r2) at (-1.5,-1.3) {\small\textit{Distribution}\vspace{18pt}};
\node[rectangle, thick, draw=lightgray, minimum width = 103mm,text width=45mm, align=center, minimum height = 1.1cm, anchor=west, dash pattern=on 0.2cm off 0.1cm] (r1) at (-1.5,1.3) {\small\textit{Upstream sourcing}};

\draw[thick, -, color=lightgray,dash pattern=on 0.2cm off 0.1cm] (r1) to (i1);
\draw[thick, -, color=lightgray,dash pattern=on 0.2cm off 0.1cm] (r1) to (i2);

\end{tikzpicture}

\caption[Example supply chain]{{Example of two echelons of a supply chain with PMS testing results.
Data are only collected from the lower echelon of test nodes.
``Upstream sourcing'' and ``Distribution'' signify echelons for which information is unavailable.}}
\label{fig:twoSCechelons}
\end{figure}
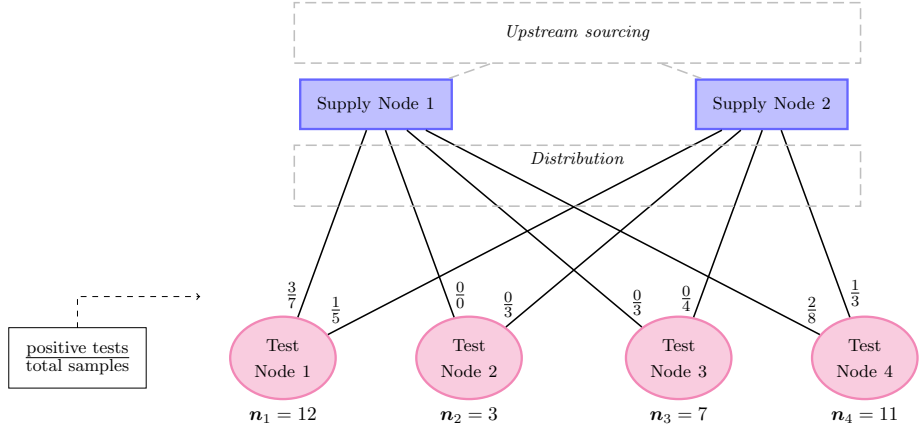

A PMS data set contains testing results and traces for each sample.
Let $\tnSet$ be the set of test nodes and let $\snSet$ be the set of supply nodes.
Regulators collect and test samples from test nodes in $\tnSet$. 
All products pass through one supply node and one test node.
A set of $\numTestsNext$ tests yields a set of binary results $\bm{\testResultpoint}=\{\testResultpoint_1,\ldots,\testResultpoint_{\numTestsNext}\}$: 
a ``1'' denotes SFP detection and ``0'' denotes no SFP detection.
Traces are stored as $(\bm{\tnPoint},\bm{\snPoint})$, where $\bm{\tnPoint}=\{\bm\tnPoint_1,\ldots,\bm\tnPoint_{\numTestsNext}\}$ is the set of test-node labels and $\bm{\snPoint}=\{\bm\snPoint_1,\ldots,\bm\snPoint_{\numTestsNext}\}$ is the set of supply-node labels.
The overall data set is $\dataSetNext=(\bm{\testResultpoint},\bm{\tnPoint},\bm{\snPoint})$.

The approach of \citet{wickett2023} seeks to infer the SFP rates associated with test nodes and supply nodes using PMS data $\dataSetNext$.
Test-node SFP rates are stored in vector $\tnRates=(\tnRates_1,\ldots,\tnRates_{\tnNum})\in(0,1)^{|\tnSet|}$ and supply-node SFP rates are stored in vector $\snRates=(\snRates_1,\ldots,\snRates_{\snNum})\in(0,1)^{|\snSet|}$.
A test can be classified as an SFP due to causes associated with either the test node or supply node (or further upstream).
The consolidated SFP rate of a test with trace $(\tnPoint,\snPoint)$ accounts for SFP causes associated with either test node $\tnPoint$ or supply node $\snPoint$ (or further upstream), and is stated as
\begin{equation} \label{eq:consolSFPProb}
    \consolSFPfunc_{\tnPoint\snPoint}(\SFPrateSet) = \tnRates_{\tnPoint} + (1 - \tnRates_{\tnPoint}) \snRates_{\snPoint} \,.
\end{equation}
The log-likelihood of $\tnsnRates=(\SFPrateSet)$ under data $\dataSetNext$ is
\begin{equation} \label{eq:loglike}
    \ell(\tnsnRates|\dataSetNext) = \sum_{\tnPoint\in\tnSet} \sum_{\snPoint\in\snSet} 
    \bigg[ \log[{\consolInaccFunc}_{\tnPoint\snPoint}(\tnsnRates)] \numTestsNextMat_{\tnPoint\snPoint} + \log[ 1-{\consolInaccFunc}_{\tnPoint\snPoint}(\tnsnRates)] (\numTestsNextMat_{\tnPoint\snPoint} - \testResultNextMat_{\tnPoint\snPoint}) \bigg] \,,
\end{equation}
where $\numTestsNextMat_{\tnPoint\snPoint}$ is the number of tests in $\dataSetNext$ from trace $(\tnPoint,\snPoint)$ and $\testResultNextMat_{\tnPoint\snPoint}$ is the number of observed SFPs.
\citet{wickett2023} showed this log-likelihood is unidentifiable: for any SFP rates $\tnsnRates$, there exist alternate rates $\tnsnRates'\neq\tnsnRates$ with equal likelihood; i.e., there exists $\tnsnRates'$ such that $\ell(\tnsnRates'|\dataSetNext)=\ell(\tnsnRates|\dataSetNext)$.
Thus, when resources are limited to testing at consumer-facing locations, as in common PMS, testing data alone cannot yield exact conclusions regarding SFP rates.

This unidentifiability is mitigated through a Bayesian approach.
Let $\priorFunc(\tnsnRates)$ be a prior on $\tnsnRates$ that is designated through regulator assessments of SFP risk.
Multiplying $\priorFunc(\tnsnRates)$ with the likelihood under data $\dataSetNext$, $\exp\big[\ell(\tnsnRates | \dataSetNext)\big]$, is then proportional to a posterior, i.e.,
\begin{equation} \label{eq:posterior_inferencepaper}
 \postFunc(\tnsnRates | \dataSetNext) \propto \exp \big[\ell(\tnsnRates | \dataSetNext)\big] \priorFunc(\tnsnRates)\,.
\end{equation}
We can infer SFP rates associated with nodes at both tested and upstream locations with Markov Chain Monte Carlo (MCMC) draws from this posterior.
Credible intervals formed from these draws are compared with designated thresholds to inform the deployment of corrective actions.

\subsection{SFP-rate inference and sampling plan utility} \label{subsec:planUtil}
The utility of a sampling plan is directly tied to SFP-rate inference.
As with {\citet{wickett2023}},
we consider two echelons of a larger, more complex supply chain in analyzing sampling plans.
Sampling plan $\allocVec\in(\posIntSet)^{|\tnSet|}$ is an allocation of samples where $\allocVec_{\tnPoint}$ is the allocation to test node $\tnPoint$ in $\tnSet$.
Each PMS iteration has a budget of $\numTestsNext$ samples; thus, $\sum_{\tnPoint\in\tnSet}\allocVec_{\tnPoint}\leq\numTestsNext$.
Consistent with PMS practice, we express the budget by the number of samples instead of the cost to procure samples.

Bayesian experimental design uses existing testing data and priors on SFP rates to explore how the value of inference might be improved through different sampling plans in the next PMS iteration \parencite{chaloner1995}.
Inference improvement is defined by a value function capturing what constitutes a good estimate of SFP rates.
Proper characterization of this value function is key to Bayesian experimental design.

Our aim is to determine the utility of new tests.
Let $\dataSet=(\testResultVec,\tnLabelVec,\snLabelVec)$ be the set of existing PMS data from previous iterations and let $\dataSetNext=(\testResultVecNext,\bm{\tnPoint},\bm{\snPoint})$ be a new data set.
The supply node associated with each test in $\dataSetNext$ is randomly drawn according to historic sourcing records.
The estimate of SFP rates, $\tnsnRatesEst(\dataSetNext,\dataSet)$, is a function of existing data $\dataSet$ as well as new data $\dataSetNext$ once sampling plan $\allocVec$ is deployed.
Let $\estVal\big[\tnsnRatesEst(\dataSetNext,\dataSet),\tnsnRates\big]$ be a loss that describes the penalty accrued as $\tnsnRatesEst$ differs from $\tnsnRates$, a vector of true SFP rates.
$\estVal$ comprises of a score, $S \big[\tnsnRatesEst(\dataSetNext,\dataSet),\tnsnRates\big]$, and a weight, $W(\tnsnRates)$, that together capture how estimates impact regulatory objectives.
The score reflects the difference between an estimate and the truth; the weight defines the significance of the truth for regulatory goals.
The loss of an estimate formed using $\dataSetNext$ sums the score and weight for each node:
\begin{equation} \label{eq:pmsloss}
\estVal  \big[  \tnsnRatesEst(\dataSetNext,\dataSet),\tnsnRates \big] = \sum_{\tnsnPoint\in\tnsnSet} \score \big[\tnsnRatesEst_\tnsnPoint(\dataSetNext,\dataSet),\tnsnRates_\tnsnPoint\big] \weight(\tnsnRates_\tnsnPoint)\,.
\end{equation}
The utility of plan $\allocVec$ is then the expected loss reduction from new data:
\begin{equation} \label{eq:designutil}
    \util (\allocVec) = \expecOp \bigg[ \estVal\big[\tnsnRatesEst(\dataSet),\tnsnRates\big]\bigg] - \expecOp\bigg[\estVal\big[\tnsnRatesEst(\dataSetNext|\allocVec,\dataSet),\tnsnRates\big] \bigg]\,.
\end{equation}
A good plan samples from test nodes with a high likelihood of yielding data that reduce estimate loss.
Plan utility provides a single metric for plan comparison and can inform evaluation of PMS budgets: analysis of the change in utility with respect to data set size may indicate budgets that do not provide high marginal value after a certain size.

\section{Interpreting utility in post-marketing surveillance} \label{sec:designUtil}
Calculating sampling plan utility requires a loss formulation that aligns SFP-rate estimates with regulatory objectives.
Section \ref{subsec:defLoss} presents loss formulation that accounts for an estimate's accuracy (score) and its significance for regulatory decision-making (weight).
Section \ref{subsec:bayesestimators} defines Bayes estimates that minimize the expected loss under new data.
Section \ref{subsec:designUtilExample} illustrates the application of sampling plan utility.
Section \ref{subsec:calcprep} presents an efficient method for utility estimation.

\subsection{Loss construction for SFP-rate estimates} \label{subsec:defLoss}
The loss captures both an estimate's accuracy and the importance of accuracy at different SFP rates.
General functions exist for measuring estimate quality, e.g., an estimate's squared error.
However, decision-makers often desire tailored functions with properties better suited to objectives \parencite{hennig2007}, particularly in Bayesian experimental design \parencite{chaloner1995}.

Characterization of the score and weight in (\ref{eq:pmsloss}) depends on the regulatory objective(either a classification or an assessment objective).
Section  \ref{subsubsec:objClass} describes the classification objective, where regulators aim to classify nodes as either high or low SFP risk.
Section \ref{subsubsec:objAssess} describes the assessment objective, where regulators aim to form useful inference of SFP rates.
The loss construction we propose is not burdensome: the regulator specifies only one parameter for the classification objective, and two parameters for the assessment objective.  

\subsubsection{Classification objective} \label{subsubsec:objClass}
A classification objective looks to correctly identify nodes that are significant sources of SFPs.
Identification is determined through classification threshold $\threshold$.
Nodes with estimated SFP rates above $\threshold$ are classified as significant SFP sources, and corrective action is taken.

\paragraph{Classification score}
Let 
\[
\classFunc(\tnsnRates_\tnsnPoint) = \begin{cases} 1 \text{ if } \tnsnRates_\tnsnPoint \geq \threshold
\text{ and} \\
0 \text{ if } \tnsnRates_\tnsnPoint < \threshold
\end{cases}
\]
be a classification for SFP rate $\tnsnRates_\tnsnPoint$ at node $\tnsnPoint$ with respect to threshold $\threshold$.
The classification score for estimate $\tnsnRatesEst_\tnsnPoint$ of latent SFP rate $\tnsnRates_\tnsnPoint$ is
\begin{equation} \label{eq:scoreClassify} S_1\left(\tnsnRatesEst_\tnsnPoint,\tnsnRates_\tnsnPoint\right) =  \Big[C(\tnsnRatesEst_\tnsnPoint) - C(\tnsnRates_\tnsnPoint)\Big]^{+}+ \underEstWt\Big[C(\tnsnRates_\tnsnPoint) -C(\tnsnRatesEst_\tnsnPoint)\Big]^{+}
\end{equation}
for any $\tnsnPoint\in\tnsnSet$, where $(x)^+=\max(0,x)$.
The first term captures overestimation error and the second term captures underestimation error, magnified by parameter $\underEstWt$.
Reducing the score means better classifying node SFP rates correctly as below or above $\threshold$.

User-determined parameter $\underEstWt$ is the penalty of underestimation relative to overestimation.
Overestimation means potentially mis-allocating corrective actions; underestimation means potentially neglecting a significant SFP source.
Increasing $\underEstWt$ favors estimates that result in classifying more nodes as SFP sources.
One can consider $\underEstWt$ as the trade-off between corrective actions at high-SFP and low-SFP nodes: each accurate action requires potentially $\underEstWt$ inaccurate actions.
Over-allocating resources like corrective actions is preferred to neglecting issues in many public-health situations.
For instance, \citet{wachtel2010} and \citet{olivares2008} identified a $\underEstWt$ around 2 for balancing overstaffing and effectiveness in an operating room under uncertain operating times.
With PMS, the number of corrective actions deployed depends on the number of high-SFP nodes.
If regulators expect few such nodes, $\underEstWt$ can be set high: acting against true high-SFP nodes means possibly acting against many low-SFP nodes.
If regulators expect many high-SFP nodes, $\underEstWt$ can alternatively be set low.
This trade-off may depend on the corrective action class being considered: more punitive actions may require higher classification accuracy.

\subsubsection{Assessment objective} \label{subsubsec:objAssess}
An assessment objective measures estimate accuracy and the significance of the latent SFP rate for regulatory decision-making.

\paragraph{Assessment score}
The assessment score measures the deviation between estimate $\tnsnRatesEst_\tnsnPoint$ and latent rate $\tnsnRates_\tnsnPoint$:
\begin{equation} \label{eq:scoreAssess} S_2(\tnsnRatesEst_\tnsnPoint,\tnsnRates_\tnsnPoint) = (\tnsnRatesEst_\tnsnPoint - \tnsnRates_\tnsnPoint)^{+}+ \underEstWt(\tnsnRates_\tnsnPoint - \tnsnRatesEst_\tnsnPoint)^{+}\,,
\end{equation}
for each $\tnsnPoint\in\tnsnSet$.
The assessment score mirrors the classification score in balancing overestimation and underestimation through $\underEstWt$, but differs in considering the absolute difference between estimate and true rates, instead of the classification difference.

\paragraph{Assessment weight} 
The weight reflects the relative importance of different ranges of SFP rates for decision-making.
Uncertainty for rates near the decision threshold is more hazardous than rates away from the threshold.
The assessment weight for latent rate $\tnsnRates_\tnsnPoint$ is
\begin{equation} \label{eq:riskCheck}
    W(\tnsnRates_\tnsnPoint) = 1-\tnsnRates_\tnsnPoint\left[\checkRiskSlope-\mathbbm{1}\{\tnsnRates_\tnsnPoint<\threshold\}\left(1-\frac{\threshold}{\tnsnRates_\tnsnPoint}\right) \right] \,,
\end{equation}
for each $\tnsnPoint\in\tnsnSet$.
Parameter $\checkRiskSlope\in[0,1]$ designates the slope of the weight away from threshold $\threshold$.
The weight is highest at $\threshold$ independent of $\checkRiskSlope$.
Figure \ref{fig:rateweights} plots weight values for different choices of $\checkRiskSlope$ for $\threshold=30\%$.
For rate $\tnsnRates_\tnsnPoint<\threshold$, the weight is $1-(\checkRiskSlope-1)\tnsnRates_\tnsnPoint-\threshold$; for rate $\tnsnRates_\tnsnPoint\geq\threshold$, the weight is $1-\checkRiskSlope\tnsnRates_\tnsnPoint$.
Values for $\checkRiskSlope$ below $0.5$ produce a reduced slope for SFP rates higher than $\threshold$ and an increased slope for rates lower than $\threshold$: values above $\threshold$ are weighted higher than values below $\threshold$.
Values for $\checkRiskSlope$ above $0.5$ produce the opposite effect: values below $\threshold$ are instead weighted higher.

\begin{figure}[ht]
    \begin{center}
        \includegraphics[width=0.65\textwidth]{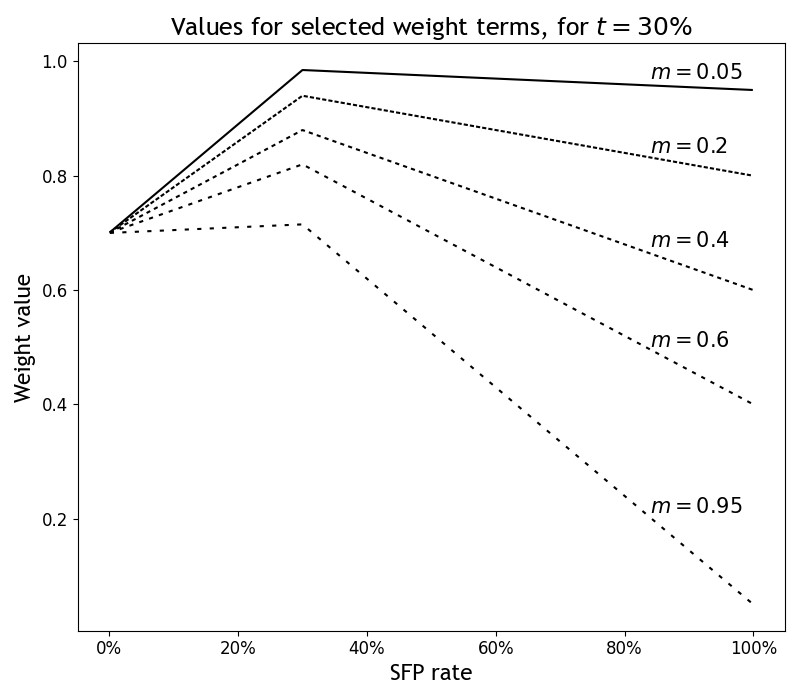}
    \end{center}
    \caption[Illustration of assessment weights]{Illustration of weights for $\threshold=30\%$ and $\checkRiskSlope\in\{0.05,0.2,0.4,0.6,0.95\}$.}
    \label{fig:rateweights}
\end{figure}

The weight is a key element of loss specification for the assessment objective, though results do not seem overly sensitive to exact weight values.
Estimation accuracy for rates near $\threshold$ is more important than accuracy for rates far from $\threshold $: poor estimation of rates far from $\threshold$ is less likely to have an impact on regulatory decisions. 
For example, distinguishing a 5\% SFP rate from a 20\% SFP rate may be more important than distinguishing a 70\% rate from an 85\% rate, although the difference within each pair of rates is the same. 
Corrective actions will be deployed in the latter case for either rate, while the deployment of corrective actions in the former case may depend on the latent reality.
A value of $\checkRiskSlope$ between 0.6 and 0.95 meets the common PMS aim of de-emphasizing high latent rates while focusing utility at rates near $\threshold$.
This unequal weighting of rates above or below $\threshold$ through a single parameter, $\checkRiskSlope$, is the appeal of using (\ref{eq:riskCheck}) over other weight forms, such as a parabolic weighting.
Where $\threshold$ is not well established, using a high $\threshold$ appropriate to the regulatory setting, in combination with an $\checkRiskSlope$ above 0.9, places approximately equal emphasis on rates below the highest reasonable $\threshold$ while de-emphasizing very high rates.

\subsection{Forming Bayes estimates from PMS data} \label{subsec:bayesestimators}
This section introduces Bayes estimates that integrate PMS data with the score and weight values to evaluate sampling plan utility. 

Bayes estimates are decisions that minimize expected loss \parencite{Ghosh2006}.
In PMS, these are SFP-rate estimates that best integrate regulatory goals and testing data.
The loss is a measure of the quality of estimate $\tnsnRatesEst$: the loss incorporates how estimates will be used, the penalty of overestimation versus underestimation, and the weighted relevance of the true SFP rate being estimated.
The true SFP rates are unknown; these rates are inferred via $p(\tnsnRates|\dataSetNext)$, the posterior using available data.
The Bayes estimate minimizes the expected loss under $p(\tnsnRates|\dataSetNext)$:
\begin{equation}
    \tnsnRatesEst( \dataSetNext ) = \argmin\limits_{\Dot{\tnsnRates}} \mathbbm{E}_{\tnsnRates|\dataSetNext}\estVal(\Dot{\tnsnRates},\tnsnRates)\,,
\end{equation}
where $\Dot{\tnsnRates}$ is a vector with length $|\tnsnSet|$.
The Bayes estimate can be found by derivation, by using standard optimization software, or by selecting the member of a set of MCMC draws from $p(\tnsnRates|\dataSetNext)$ that minimizes the average loss across all MCMC draws.
Derived Bayes estimates for the loss constructions of Section \ref{subsec:defLoss} are described in Appendix \ref{SuppMat:bayesEstimates}.

\subsection{Benefits of the utility metrics} \label{subsec:designUtilExample}

Sampling plan utility sharpens plan selection in at least three ways, which we illustrate through an example.
First, plan utility provides a single metric for how well a plan integrates objectives with data; this metric facilitates plan comparison.
Second, plans can be compared for given budgets, allowing evaluation of the trade-off between utility and budget.
Third, plans meeting different operational constraints can be compared, yielding a utility cost for these constraints.

Recall the example of Figure \ref{fig:twoSCechelons}, which consists of 4 test nodes, 2 supply nodes, 33 tests and 7 positive results.
Suppose the goal is to ascertain the best sampling plan for the next $\numTestsNext$ tests.
Regulators want to explore SFP prevalence; thus, they apply the assessment objective.
The number and results of tests already collected along each test node-supply node trace are respectively provided as $\numTestsMat$ and $\testResultMat$, where row $i$ denotes Test Node $i$ and column $j$ denotes Supply Node $j$:
\begingroup
\[ 
\renewcommand{\arraystretch}{0.5}
\numTestsMat=\left[ \begin{array}{cc}
7 & 5 \\
0 & 3 \\
3 & 4 \\
8 & 3
\end{array} \right],
\testResultMat=
\left[ \begin{array}{cc}
3 & 1 \\
0 & 0 \\
0 & 0 \\
2 & 1
\end{array} \right]\,.
\]
\endgroup
We see from $\testResultMat$ that SFPs have been detected at Test Nodes 1 and 4, sourced from Supply Nodes 1 and 2.
Inference from these data is not strong enough to take corrective actions: one cannot conclude if SFP issues occurred due to test nodes or locations upstream in the supply chain.
Therefore, regulators plan to collect more tests.
Suppose three plans are under consideration.
The proportion of tests allocated to each test node under any plan can be expressed in vector form:
\[ 
\allocVec^{\text{LT}}=\left[ \begin{array}{cc}
0 \hspace{5pt} 
\numTestsNext \hspace{5pt} 
0 \hspace{5pt} 
0
\end{array} \right]^{\intercal}, \hspace{3pt}
\allocVec^{\text{U}}=\left[ \begin{array}{cc}
\frac{\numTestsNext}{4} \hspace{5pt} 
\frac{\numTestsNext}{4} \hspace{5pt} 
\frac{\numTestsNext}{4} \hspace{5pt} 
\frac{\numTestsNext}{4}
\end{array} \right]^{\intercal}, \hspace{3pt}
\allocVec^{\text{HS}}= \left[ \begin{array}{cc}
\frac{\numTestsNext}{2} \hspace{5pt} 
0 \hspace{5pt} 
0 \hspace{5pt} 
\frac{\numTestsNext}{2}
\end{array} \right]^{\intercal}\,.
\]
These plans meet different concerns.
Plan $\allocVec^{\text{LT}}$, a ``least tested'' plan, allocates all tests to Test Node 2, with the least prior tests.
Plan $\allocVec^{\text{U}}$, a ``uniform'' plan, allocates tests to all nodes equally.
Plan $\allocVec^{\text{HS}}$, a ``highest SFPs'' plan, evenly allocates tests to Test Nodes 1 and 4, with prior detected SFPs.

\begin{figure}
    \centering
    \includegraphics[width=0.48\textwidth]{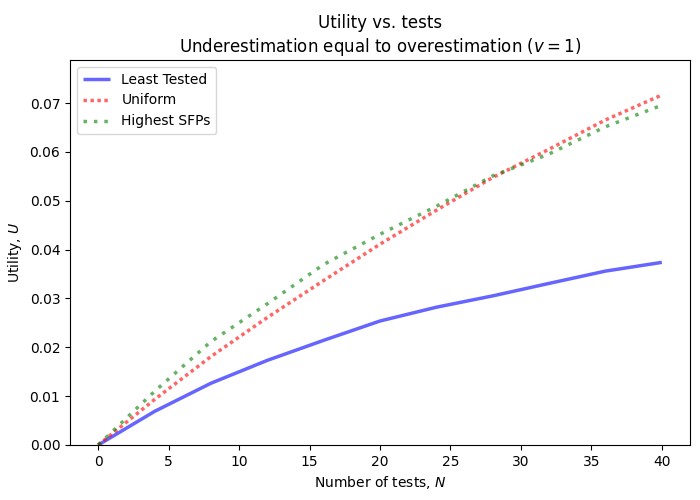} 
    \hfill
    \includegraphics[width=0.48\textwidth]{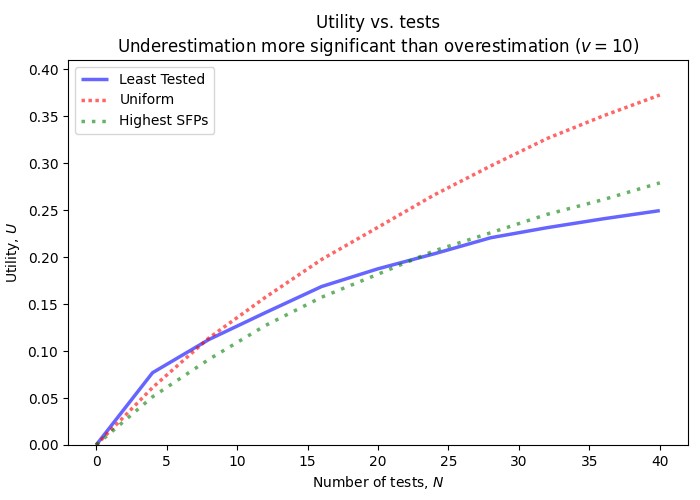} 
    \caption[Illustration of utility versus budget for example plans]{{Illustration of utility for different plans as the size of the next set of tests, $\numTestsNext$, increases in intervals of four tests. 
    }}
    \label{fig:exampleSCmargutil}
\end{figure}

Suppose regulators designate the loss with a score that equally considers overestimation and underestimation  ($\underEstWt=1$) and a weight using threshold $\threshold=20\%$ and slope $\checkRiskSlope=0.6$.
Figure \ref{fig:exampleSCmargutil} (left) shows the utility for each plan as the total number of tests, $\numTestsNext$, increases in intervals of four tests.
This plot shows a few key aspects of plan utility.
First, there is a diminishing return on additional testing: less is learned through continued testing.
Second, the preferred plan can depend on budget: the Highest SFPs plan provides higher utility until the budget exceeds 25 tests, where the Uniform plan becomes preferable.
For low budgets, estimates of node SFP rates are best improved by testing where SFPs have been previously observed.
Third, the utility of the Highest SFPs plan is close to the utility of the Uniform plan. 
As the Highest SFPs plan requires visiting only two of the four nodes, this plan may still be preferable even for high budgets.

Changes to loss parameters underscore the need to accurately characterize the loss.
Suppose the score parameter is changed to $\underEstWt=10$, meaning underestimation is $10$ times worse than overestimation.
Figure \ref{fig:exampleSCmargutil} (right) shows the resulting utility evaluations as $N$ increases.
The difference in scales for the $y$-axis of each figure results from different loss specifications. 
The key for analysis is determining which plan provides the highest utility as compared with other plans---not identifying the highest absolute utility value across loss specifications.
Changing $\underEstWt$ from 1 to 10 makes the Highest SFPs plan the worst plan from 0 to 20 tests. 
This makes sense when considering there are only 3 tests from Test Node 2: the likely range of SFP rates for this node is wide, so there is a strong underestimation risk.
Interestingly, the Least Tested plan provides the most initial value for the ``next test,'' though this plan's advantage dissipates after about 5 tests. 
Additionally, the Uniform plan dominates the other two plans once $\numTestsNext$ exceeds around 10 tests: the risk of underestimating at Test Node 2 is outweighed by the risk of underestimating at other nodes. 
A balanced approach best ensures no significant SFP sources are missed.

\subsection{Efficient estimation of sampling plan utility}
\label{subsec:calcprep}
Moving beyond the simple example, a utility-based approach to sampling plan development requires methods for efficient utility estimation.
MCMC integration is standard for estimating expressions like sampling plan utility \parencite{kleingesse2019}.
However, the computational resources needed for standard MCMC integration are prohibitive for many low-resource contexts: calculating the utility for a single sampling plan requires about 16 hours with an Intel Core 1.6GHz processor.
Regulators with limited computational resources cannot then analyze the benefit of different plans, which is a key practical benefit of considering plan utility.

A faster utility approximation is found by avoiding the repeated MCMC sampling of standard MCMC integration when estimating a loss and instead coupling a single set of MCMC draws with a collection of simulated data sets, detailed in Appendix \ref{SuppMat:calcUtil}.
The posterior for the expected loss is expressed as $p(\tnsnRates|\dataSet,\dataSetNext) p(\dataSetNext|\dataSet)$ \parencite{chaloner1995}.
Manipulation of this posterior produces an equivalent posterior that separates SFP-rate MCMC draws under known data from the binomial likelihood of an SFP-rate vector under a simulated data set:
\[
 p(\tnsnRates|\dataSet,\dataSetNext) p(\dataSetNext|\dataSet) = p(\tnsnRates|\dataSet) p(\dataSetNext|\tnsnRates) \,.
 \]
This likelihood can be efficiently calculated for each SFP-rate draw and simulated data set.
An efficient approximation for the expected loss is generated through a weighted sum of loss realizations for each SFP-rate draw in a single MCMC set.
The resulting utility approximation requires around 10 minutes for settings similar in size to the case study in Section \ref{sec:casestudy}.

This efficient approximation of utility is subject to bias under settings with sufficiently large supply chains or numbers of tests.
A small subset of the MCMC-generated set of SFP-rate draws acquire an outsized impact on the resulting loss estimate.
Bias stems from identifying Bayes minimizers that are overfit to this subset: the resulting loss estimate is too low.
Increasing the number of SFP-rate MCMC draws mitigates this bias.
However, the time required to generate and manipulate a sufficiently large set of draws can approach the time needed for standard MCMC integration.
We propose an importance-sampling approach for efficiently mitigating the bias of large systems, which is shown to work well in the case study.
The first step of this approach is identifying  the important region of the space of SFP rates for a given supply chain and sampling plan.
This region is then oversampled.
The resulting set of MCMC draws, coupled with their importance weights, is used for efficient loss estimation.

Software package 
\if1\blind
{\textit{[software removed for blind review]}}
\fi 
\if0\blind
{\texttt{logistigate}}, available on GitHub \parencite{logistigate2021},
\fi 
provides methods for utility estimation, including efficient approximation and the importance sampling approach.

\section{Integrating utility into sampling plan generation} \label{sec:casestudy}
The development of a new utility metric is part of a larger collaboration with a medical product regulatory agency implementing PMS in a low-resource setting.  
Using a de-identified case study with data from this agency, we illustrate how this metric can assist plan evaluation.

\subsection{Case study background}
The agency conducted the first PMS iteration for a class of maternal and child health products in 2021.
Products had not yet been broadly tested at consumer-facing locations.
The agency tested 177 samples from 4 provinces near the capital, of which 62 were SFPs (a 35\% rate).
Samples featured 13 unique manufacturer labels.

The sampling plan for the first PMS iteration was developed using the risk-based approach of \citet{nkansah2018}.
The agency will conduct a second PMS iteration for the same products to form estimates of SFP rates throughout the supply chain (assessment objective).
The first iteration allocated samples to the four provinces nearest the central laboratory.
This met operational constraints, including time and transportation costs.
The usual approach for the next iteration would be to re-apply these assessments to formation of the new plan.

The operational constraints for the next iteration, including the PMS budget, are largely dependent on variable funding sources that are not yet established.
Thus, this case study considers a range of possible budgets. 
There are also two considered implementation settings: an \emph{existing} setting and an \emph{all-provinces} setting.
The existing setting only considers allocations to the provinces of the first iteration.
The all-provinces setting considers every province.

\subsection{Preparations for applying utility}
Section \ref{subsubsec:contextprep} describes the preparation necessary for aligning utility with the regulatory context.
Section \ref{subsubsec:greedyheuristic} describes a greedy heuristic for plan formation.

\subsubsection{Capturing the regulatory context in utility calculations}
\label{subsubsec:contextprep}
Construction of the loss was informed by collaboration with the regulatory agency. 
An SFP-rate threshold of $\threshold=15\%$ approximated the agency's threshold of serious concern.
The agency indicated a higher tolerance for overestimation versus underestimation of SFP rates: about five false alarms per true detection ($\underEstWt=5$) justifies the use of correction resources.
We use an assessment weight with $\checkRiskSlope=0.6$ to de-emphasize high versus low SFP rates, while prioritizing rates near $\threshold$.
Sensitivity analysis shows generally small but expected changes under different choices
\if0\blind{\parencite{wickettdissertation}.}\fi
\if1\blind{\parencite{wickettdissertationblind}.}\fi

We estimated sourcing probabilities for test nodes visited in the first PMS iteration through the traces observed in that iteration, as was done in \citet{wickett2023}.
Elicitation of priors on SFP rates uses an adaptation of the risk assessment process of \citet{nkansah2018}.
This process places nodes in one of seven SFP-risk categories.
Our elicitation methodology and details on sourcing probability estimation are described further in Appendix \ref{SuppMat:caseStudyPrep}.

\subsubsection{Plan formation through a greedy heuristic}
\label{subsubsec:greedyheuristic}
Efficient utility calculation opens valuable avenues of analysis for regulators.
However, identifying a utility-maximizing plan remains a challenge: calculating the utility of each possible plan is combinatorically impractical.
While our ongoing research is exploring this challenge, this section shows that even a simple heuristic that leverages efficient calculation and marginal utility can form sampling plans with high utility across a budgetary range.

\begin{algorithm}
    \small
    \DontPrintSemicolon
        \KwInput{$\tnSet$, $\numTestsNext$, $\util(\cdot)$ }
        \KwOutput{Set of test allocations $\big\{\allocVec^{(i)}:i\in\{1,\dots,\numTestsNext\}\big\}$ for each possible budget $i$}
        Initialize $\allocVec^{(0)}_\tnPoint:=0$ for $\tnPoint\in\tnSet$.
        \\
        \For{$i$ in $\{1,\dots,\numTestsNext\}$}{
            \For{$\tnPoint\in\tnSet$}{
                $F_\tnPoint= \util(\allocVec^{(i-1)}+\bm{e}_\tnPoint) - \util(\allocVec^{(i-1)})$.    
            }
            Set $a'=\argmax_a F_a$.\\
            \For{$\tnPoint\in\tnSet$}{
                \begin{fleqn}[0pt]
                \begin{align*}
                    \allocVec^{(i)}_{\tnPoint}=
                    \begin{cases}
                        \allocVec^{(i-1)}_{\tnPoint}+1 &\text{ if } \tnPoint=\tnPoint' \\
                        \allocVec^{(i-1)}_{\tnPoint} &\text{ otherwise.}
                    \end{cases}
                \end{align*}
                \end{fleqn}
            }
        }
         Return $\{\allocVec^{(i)}:i\in\{1,\dots,\numTestsNext\}\}$.
    \caption{Greedy heuristic for sampling plan generation with varying budgets}
    \label{alg:greedyallocmain}
\end{algorithm}

This heuristic finds the best allocation for the $i^{\text{th}}$ test, given an allocation of the first $(i-1)$ tests.
Algorithm \ref{alg:greedyallocmain} takes as inputs a set of test nodes ($\tnSet$), a number of tests ($\numTestsNext$), and a function yielding an allocation's utility $\big(U(\cdot)\big)$.
The output is a set of allocation vectors $\{\allocVec^{(i)}\}_{i=1}^{\numTestsNext}$, where each vector has length $|\tnSet|$, for all budgets $i\in\{1,\dots,\numTestsNext\}$.
The algorithm proceeds as follows.

\paragraph{Initialize}
In Line 1, initialize the allocation to each test node as zero: $\allocVec^{(0)}_\tnPoint:=0$ for each $\tnPoint\in\tnSet$.

\paragraph{Add node with the best marginal utility increase}
Loop through budgets from 1 to $\numTestsNext$ tests.
Estimate the utility of a test from each available test node in Lines 3 and 4.
Marginal utility $F_{\tnPoint}$ is the difference between the utility of the last budget, $U\big(\allocVec^{(i-1)}\big)$, and the utility of the last budget plus one test at node $\tnPoint$; $\bm{e}_{\tnPoint}$ denotes a basis vector with a 1 at position $\tnPoint$. 
In Lines 5 through 7, allocate a test to the test node with the highest marginal utility.

Algorithm \ref{alg:greedyallocmain} uses $\big(|\tnSet|\times\numTestsNext\big)$ utility evaluations and leverages the sampling plans under successive budgets to identify the next best allocation.
Intervals of $\numTestsNext$, e.g., batches of five tests, can be used at each step to reduce the number of estimates.
The smallest possible interval justifying transport to a region may be an appropriate interval of $\numTestsNext$.
Figure \ref{fig:greedyallocation} illustrates three iterations of the heuristic.

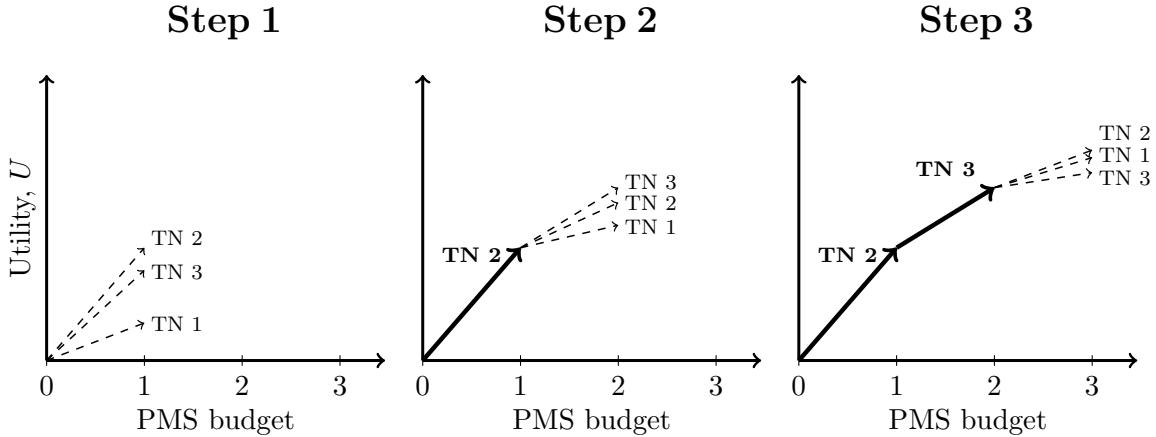
\begin{figure}[h]
    \centering
    \begin{tikzpicture}
\newcommand\Xa{0}
\newcommand\Xb{5.0}
\newcommand\Xc{10.0}
\newcommand\Xlen{4.5}
\newcommand\Ylen{3.8}
\coordinate (origa) at (\Xa,0);
\coordinate (origb) at (\Xb,0);
\coordinate (origc) at (\Xc,0);
\coordinate (Yaxa) at (\Xa,\Ylen);
\coordinate (Yaxb) at (\Xb,\Ylen);
\coordinate (Yaxc) at (\Xc,\Ylen);
\coordinate (Xaxa) at (\Xa+\Xlen,0);
\coordinate (Xaxb) at (\Xb+\Xlen,0);
\coordinate (Xaxc) at (\Xc+\Xlen,0);

\draw[line width=0.4mm,->] (origa) -- node[below=3ex] {PMS budget} (Xaxa) ;
\draw[line width=0.4mm,->] (origb) -- node[below=3ex] {PMS budget} (Xaxb) ;
\draw[line width=0.4mm,->] (origc) -- node[below=3ex] {PMS budget} (Xaxc) ;
\draw[line width=0.4mm,->] (origa) -- node[midway,above,rotate=90]  {Utility, $U$} (Yaxa) ;
\draw[line width=0.4mm,->] (origb) -- node[]  {} (Yaxb) ;
\draw[line width=0.4mm,->] (origc) -- node[]  {} (Yaxc) ;
\foreach \i [count=\j from 0] in {0.25, 0.5, 0.75, 1}
{
   \draw (\j*1.3,2pt) -- ++ (0,-4pt) node[below] {$\j$};
   \draw (\Xb+\j*1.3,2pt) -- ++ (0,-4pt) node[below] {$\j$};
   \draw (\Xc+\j*1.3,2pt) -- ++ (0,-4pt) node[below] {$\j$};
}
\node[text width=3cm] at (3.1,4.5) 
    {\Large{\textbf{Step 1}}};
\node[text width=3cm] at (3.1+\Xb,4.5) 
    {\Large{\textbf{Step 2}}};
\node[text width=3cm] at (3.1+\Xc,4.5) 
    {\Large{\textbf{Step 3}}};
\newcommand\Xticka{\Xa+1.3}
\draw[line width=0.2mm,->,dashed] (origa) -- node[above right=0.1mm and 6mm]  {\scriptsize{TN 1}} (\Xticka,0.5) ;
\draw[line width=0.2mm,->,dashed] (origa) -- node[above right=6.5mm and 6mm]  {\scriptsize{TN 2}} (\Xticka,1.5) ;
\draw[line width=0.2mm,->,dashed] (origa) -- node[above right=3.5mm and 6mm]  {\scriptsize{TN 3}} (\Xticka,1.2) ;
\newcommand\Xtickb{\Xb+1.3*2}
\coordinate (step2junc) at (\Xticka+\Xb,1.5);
\draw[line width=0.6mm,->] (origb) -- node[above=4mm ]  {\scriptsize{\textbf{TN 2}}} (step2junc) ;

\draw[line width=0.2mm,->,dashed] (step2junc) -- node[above right=-1mm and 6mm]  {\scriptsize{TN 1}} (\Xtickb,1.5+0.3) ;
\draw[line width=0.2mm,->,dashed] (step2junc) -- node[above right=0.6mm and 6mm]  {\scriptsize{TN 2}} (\Xtickb,1.5+0.6) ;
\draw[line width=0.2mm,->,dashed] (step2junc) -- node[above right=2.5mm and 6mm]  {\scriptsize{TN 3}} (\Xtickb,1.5+0.8) ;
\newcommand\Xtickc{\Xc+1.3*3}
\coordinate (step3junc1) at (\Xticka+\Xc,1.5);
\coordinate (step3junc2) at (\Xtickc-1.3,2.3);
\draw[line width=0.6mm,->] (origc) -- node[above=4mm ]  {\scriptsize{\textbf{TN 2}}} (step3junc1) ;
\draw[line width=0.6mm,->] (step3junc1) -- node[above=4mm ]  {\scriptsize{\textbf{TN 3}}} (step3junc2) ;

\draw[line width=0.2mm,->,dashed] (step3junc2) -- node[above right=0.0mm and 6mm]  {\scriptsize{TN 1}} (\Xtickc,2.3+0.4) ;
\draw[line width=0.2mm,->,dashed] (step3junc2) -- node[above right=2.5mm and 6mm]  {\scriptsize{TN 2}} (\Xtickc,2.3+0.5) ;
\draw[line width=0.2mm,->,dashed] (step3junc2) -- node[above right=-2mm and 6mm]  {\scriptsize{TN 3}} (\Xtickc,2.3+0.2) ;
\end{tikzpicture}
\caption[Greedy heuristic example]{{{Greedy heuristic example with three test nodes (TNs).}}}
\label{fig:greedyallocation}
\end{figure}

\subsection{Benefits of considering plan utility} \label{subsec:CSinitial}
The PMS budget for the next iteration may be unknown, so we consider a budgetary range up to 400 samples, more than twice the first budget.
We evaluate in intervals of 10 samples: 17 samples was the smallest test-node allocation in the first iteration, and a finer precision is unnecessary for the agency's planning.
Utility-informed allocations derive from the greedy heuristic.
Each of the utility-informed allocations in the existing and all-provinces settings is compared with two different allocations: a \textit{uniform} allocation where samples are distributed evenly across all test nodes, and a 
\textit{fixed} allocation where samples are distributed according to the sampling plan of the first iteration.

\begin{figure}
    \centering
    \subfloat[Utility-informed sample allocation in the existing setting. \label{subfig:CS_fam_alloc}]{%
    \includegraphics[width=0.47\textwidth]{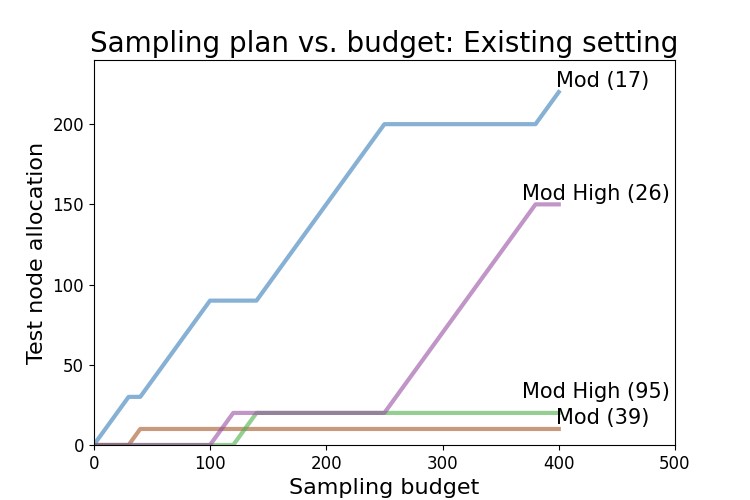} }
    \hfill
    \subfloat[Utility-informed sample allocation in the all-provinces setting. \label{subfig:CS_expl_1}]{%
    \includegraphics[width=0.47\textwidth]{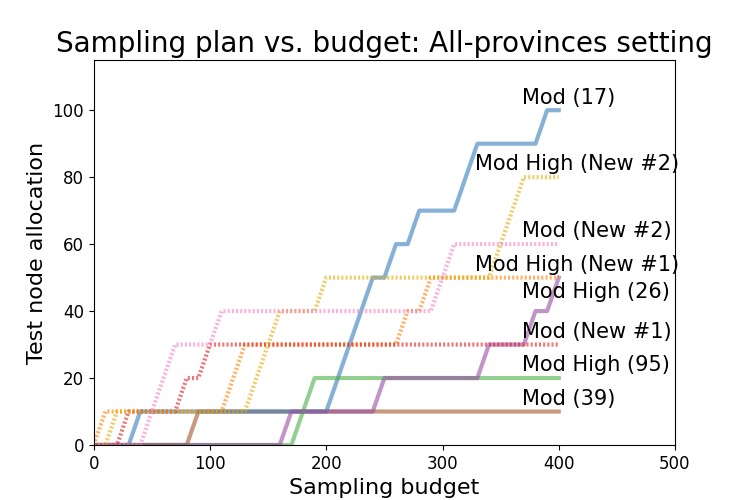} }
    \\
    \subfloat[95\% confidence intervals for utility of the utility-informed, uniform, and fixed allocations in the existing setting.\label{subfig:CS_fam_utilcompare}]{%
    \includegraphics[width=0.47\textwidth]{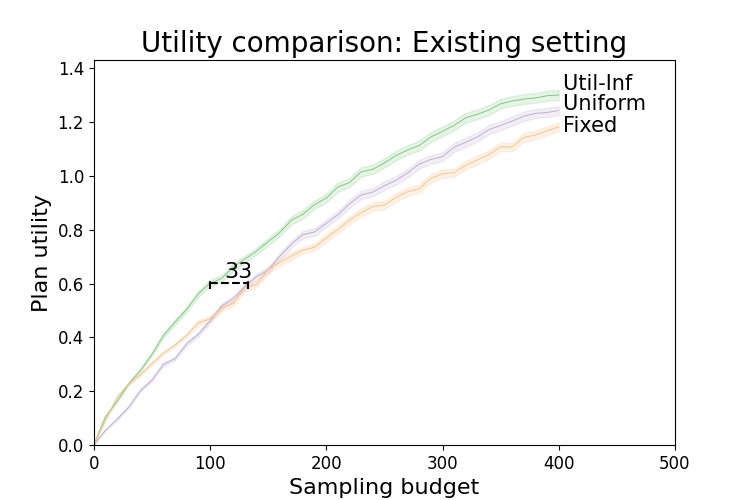} }
    \hfill
    \subfloat[95\% confidence intervals for utility of the utility-informed, uniform, and fixed allocations in the all-provinces setting.\label{subfig:CS_expl_2}]{%
    \includegraphics[width=0.47\textwidth]{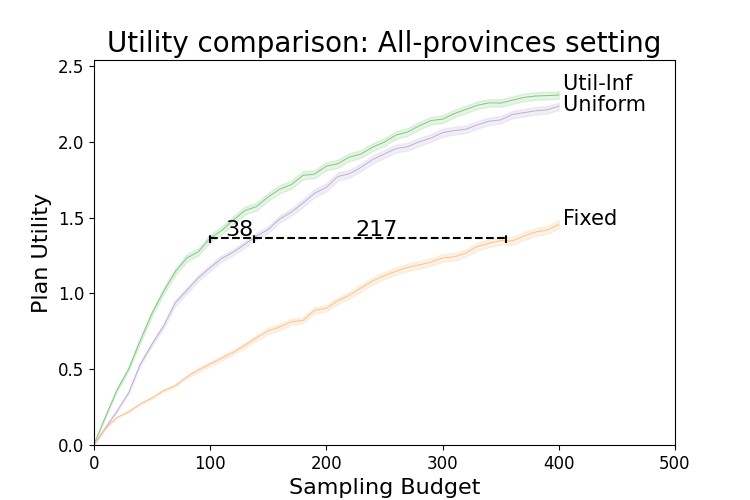}}
    \caption[Utility-informed plans and a utility comparison with the uniform and fixed approaches]{{Utility-informed plans and a utility comparison with the uniform and fixed approaches in the existing and all-provinces settings. 
    The utility comparisons show the number of samples saved under the utility-informed plan, at 100 samples, to attain the same utility under other approaches.
    Variance in utility estimates stems from the use of different MCMC sets of SFP rates at each 10-test interval of the budget; see Appendix \ref{SuppMat:calcUtil} for details.
    }}    
    \label{fig:CSplots}
\end{figure}

We first consider the setting of existing provinces.
The SFP risk assessment classified two provinces as Moderate SFP risk and two provinces as Moderately High SFP risk.
We label provinces by these SFP risk classifications as well as the number of associated samples in the first PMS iteration.
For example, province ``Mod (39)'' is classified as Moderate risk and had 39 first-iteration samples. 
Figure \ref{subfig:CS_fam_alloc} shows the utility-informed allocation.
Mod (17) comprises around half the allocation for any considered budget: a budget of 250 samples results in allocating 200 samples to Mod (17), which is more than twice the maximum of 95 samples from the first iteration.
Conversely, Mod (39) only receives 10 samples for any budget exceeding 40 samples.
These two test nodes have the same risk and only differ by 19 first-iteration samples, yet samples at Mod (17) provide substantially stronger utility.
Allocations for budgets exceeding 250 samples emphasize Mod High (26), with a budget of 380 samples resulting in allocating 150 samples to this province.
Mod High (95) never receives more than 20 samples for any budget.
Thus, neither SFP risk nor prior samples are an exclusive driver of locations' utility.

Figure \ref{subfig:CS_fam_utilcompare} shows the estimated cumulative utility across budgets for the utility-informed, uniform, and fixed policies.
We observe a consistent utility improvement under the utility-informed policy relative to other policies. 
A budget of 100 samples deployed under the utility-informed policy requires about 33 fewer samples than the uniform or fixed policy to attain the same utility.
This level of savings (a third of the budget) can be significant in low-resource settings where sample procurement and testing are frequent barriers to more consistent PMS iterations.
We also observe different policy preferences, depending on the budget, between the uniform and fixed policies.
The fixed policy emphasizes allocations to Mod (17) similarly to the utility-informed policy, and has a higher estimated utility than the uniform policy until around 100 samples.
The uniform policy has higher estimated utility than the fixed policy for budgets exceeding 150 samples.
This switch in policy performance makes sense when considering that the uniform policy more closely aligns with the utility-informed policy as the associated sample budget increases.

We next consider the results of the all-provinces setting.
The dotted lines in the allocation of Figure \ref{subfig:CS_expl_1} correspond to test nodes not visited in the existing setting.
These ``new'' test nodes are named by their assessed SFP risk: there are two test nodes each with Moderate and Moderately High risk.
The all-provinces results employ a sourcing matrix for new nodes formed by a single bootstrap sample of the traces observed in the first iteration.
The utility-informed allocations in the existing and all-provinces settings closely align.
The allocation favors new nodes for budgets below 200 samples, allocating at least 30 samples to each new node.
This allocation thus parallels the uniform allocation, and the corresponding utilities in Figure \ref{subfig:CS_expl_2} are closer than in the existing setting.
The allocation emphasizes the nodes of the existing setting for budgets exceeding 200 samples: Mod (17) receives about twice as many samples as Mod High (26), while \hbox{Mod High (95)} and Mod (39) receive only 20 and 10 samples, respectively.
This emphasis mirrors the utility-informed allocation of the existing setting.
From Figure \ref{subfig:CS_expl_2} we see that the fixed policy faces a comparative disadvantage in not sampling new test nodes: more than 250 fixed samples are required to match the utility of 100 samples of the utility-informed allocation.
At the same time, the fixed policy does not incur any sunk costs associated with visiting untested regions.
The all-provinces case thus illustrates how utility can form a basis for exploring trade-offs in operational costs.

The results of this case study illustrate the value of sampling plan utility.
Considering only assessed SFP risk, or where tests have been previously conducted, may not best produce inference that furthers regulatory decision-making.
Applying plan utility leverages supply-chain information and more efficiently centers plans around regulatory decisions.

\section{Conclusion} \label{sec:discussion}
Regulation of SFPs is most urgent in low-resource settings, where the burden of poor-quality products is largest \parencite{ozawa2018}. 
PMS is a crucial activity for understanding the quality of products reaching the public \parencite{who2017-2}. 
Our work provides a means of characterizing and estimating the value of PMS through a sampling plan's utility, and demonstrates how this utility can be employed for the comparison of sampling plans.
Approaches for sampling plan development depend substantially on the availability of a efficiently calculated utility metric that appropriately characterizes the value of a sampling plan. 
Ongoing work at \if0\blind United States Pharmacopeia, under the USAID-funded PQM+ program, \fi \if1\blind[\textit{organization removed for blind review}] \fi 
will integrate sampling plan utility into a tool for field users in low- and middle-income countries. 
These countries currently use the risk-based assessment procedure of \citet{nkansah2018} for developing sampling plans.
Identifying plan utility as proposed here only requires two or three additional parameter assessments by regulators to characterize objectives.
This utility is an improvement in sampling plan development that integrates these assessments with prior tests and supply-chain information.

Our proposed loss for identifying sampling plan utility meets practical needs of PMS.
Using a summation of penalties across nodes simplifies calculation, interpretation, and synthesis with data sources that are often available on a node-by-node basis.
We multiple the score by the weight to address the fact that deviations are more important near crucial decision points.
Another benefit of this summation of penalties across nodes is the ability to distinguish loss among nodes.
For instance, catchment areas of test nodes, or market shares of supply nodes, can be used as prioritization terms for each node that can readily incorporated at each node-wise addend.
Evaluation of these terms is often available as part of the assessment process of \citet{nkansah2018}.
However, this summation ignores the interdependence of SFP rates, illustrated by the likelihood in Section \ref{subsec:inferenceinPMS}.

\section*{Data availability}
Due to the nature of this research, participants of this study did not agree for their data to be shared publicly; supporting data are not available. 
In lieu of data used in the case study, analogous data comprising the example of Section \ref{subsec:designUtilExample} that similarly demonstrate the application of our efficient estimation method can be found on Github at 
\if1\blind[\textit{link removed for blind review}]\fi
\if0\blind \url{https://github.com/eugenewickett/measuringUtility_paper_repo}\fi.

\section*{Funding}
\if0\blind
This work was funded through two National Science Foundation grants: EAGER Award 1842369: ISN: Unraveling Illicit Supply Chains for Falsified Pharmaceuticals with a Citizen Science Approach, and NSF 1953111.
We acknowledge United States Pharmacopeial Convention (USP) and the Promoting the Quality 
of Medicines Plus (PQM+) program, funded by the United States Agency for International Development 
(USAID), for the use of data in our case study.
\fi
\if1\blind
[\textit{Funding sources removed for blind review}].
\fi


\patchbibmacro{date+extradate}{%
  \printtext[parens]%
}{%
  \setunit{\addperiod\space}%
  \printtext%
}{}{}

\singlespacing 

\printbibliography

@article{kovacs2014,
    author =       {Stephanie Kovacs and Stephen E. Hawes and Stephen N. Maley and Emily Mosites and Ling Wong and Andy Stergachis},
    title =        {Technologies for detecting falsified and substandard drugs in low and middle-income countries},
    journal =      {PLoS ONE},
    volume =       {9},
    number =       {3},
    pages =        {e90601},
    year =         {2014},
    doi =          {10.1371/journal.pone.0090601}
}

@article{newton2009,
    author =       {Paul N Newton and Sue J Lee and Catherine Goodman and Facundo M. Fern\'{a}ndez and Shunmay Yeung and Souly Phanouvong and Harparkash Kaur and Abdinasir A. Amin and Christopher J. M. Whitty and Gilbert O. Kokwaro and Niklas Lindeg\'{a}rdh and Patrick Lukulay and Lisa J. White and Nicholas P. J. Day and Michael D. Green and Nicholas J. White},
    title =        {Guidelines for field surveys of the quality of medicines: A proposal},
    journal =      {PLoS Medicine},
    volume =       {6},
    number =       {3},
    pages =        {e1000052},
    year =         {2009},
    doi =          {10.1371/journal.pmed.1000052}
}

@article{newton2011,
    author =       {Paul N. Newton and Michael D. Green and Dallas C. Mildenhall and Aline Plan\c{c}on and Henry Nettey and Leonard Nyadong and Dana M. Hostetler and Isabel Swamidoss and Glenn A. Harris and Kristen Powell and Ans E. Timmermans and Abdinasir A. Amin and Stephen K. Opuni and Serge Barbereau and Claude Faurant and Ray CW Soong and Kevin Faure and Jonarthan Thevanayagam and Peter Fernandes and Harparkash Kaur and Brian Angus and Kasia Stepniewska and Philippe J. Guerin and Facundo M. Fernandez},
    title =        {Poor quality vital anti-malarials in Africa - an urgent neglected public health priority},
    journal =      {Malaria Journal},
    volume =       {10},
    pages =        {352}, %only one page found
    year =         {2011},
    doi = {10.1186/1475-2875-10-352}
}

@article{ozawa2018,
    author =      {Sachiko Ozawa and Daniel Evans and Sophia Bessias and Deson G Haynie and Tatenda T Yemeke and Sarah K Laing and James E Herrington},
    title =       {Prevalence and estimated economic burden of substandard and falsified medicines in low- and middle-income countries: A systematic review and meta-analysis},
    journal =     {JAMA Network Open},
    volume =      {1},
    number =      {4},
    pages =       {e181662},
    year =        {2018},
    doi =         {10.1001/jamanetworkopen.2018.1662}
}

@article{pisani2019,
    author =       {Elizabeth Pisani and Adina-Loredana Nistor and Amalia Hasnida and Koray Parmaksiz and Jingying Xu and Maarten Oliver Kok},
    title =        {Identifying market risk for substandard and falsified medicines: An analytic framework based on qualitative research in China, Indonesia, Turkey and Romania},
    journal =      {Wellcome Open Research},
    volume =       {4},
    pages =        {70}, % Single page number confirmed from citation on website
    year =         {2019},
    doi =          {10.12688/wellcomeopenres.15236.1}
}

@article{rotunno2014,
    author =       {Raffaele Rotunno and Vittorio Cesarotti and Attilio Bellman and Vito Introna and Miriam Benedetti},
    title =        {Impact of track and trace integration on pharmaceutical production systems},
    journal =      {International Journal of Engineering Business Management},
    volume =       {6},
    %pages =        {25},
    year =         {2014},
    doi =       {10.5772/58934}
}

@techreport{who2017,
    author =       {{World Health Organization}},
    shortauthor =   {WHO},
    title =        {A Study on the Public Health and Socioeconomic Impact of Substandard and Falsified Medical Products},
    institution =  {World Health Organization},
    year =         {2017},
    note =         {Licence: CC BY-NC-SA 3.0 IGO},
    url = {https://www.who.int/publications/i/item/9789241513432},
    type={\unskip\space},
}

@techreport{who2017-2,
    author =       {{World Health Organization}},
    shortauthor = {WHO},
    title = {WHO Global Surveillance and Monitoring System for Substandard and Falsified Medical Products},
    institution =  {World Health Organization},
    year =         {2017},
    note =         {Licence: CC BY-NC-SA 3.0 IGO},
    url = {https://www.who.int/publications/i/item/WHO-MVP-EMP-SAV-2019.04},
    type={\unskip\space},
}

@techreport{logistigate2021,
  author      = {Eugene Wickett and Matthew Plumlee and Karen Smilowitz},
  title       = {Logistigate Users Manual},
  institution = {Northwestern University},
  number      = {Version 0.1.1},
  year        = {2021},
  url         = {https://logistigate.readthedocs.io},
    type={\unskip\space},
}

@techreport{nkansah2018,
    author =       {Paul Nkansah and Karim Smine and Victor Pribluda and Souly Phanouvong and Christopher Dunn and Steven Walfish and Farouk Umaru and Aubrey Clark and Gabriel Kaddu and Mustapha Hajjou and Jude Nwokike and Lawrence Evans},
    title =        {Guidance for Implementing Risk-Based Post-Marketing Quality Surveillance in Low- and Middle-Income Countries},
    note =  {U.S. Pharmacopeial Convention. The Promoting the Quality of Medicines Program. Rockville, Maryland},
    year =         {2017},
    type={\unskip\space},
    %url = {http://www.usp-pqm.org/sites/default/files/pqms/article/risk-based-post-marketing-surveillance-feb-2018.pdf},
}

@article{suleman2014quality,
  title={Quality of medicines commonly used in the treatment of soil transmitted helminths and giardia in Ethiopia: A nationwide survey},
  author={Suleman, Sultan and Zeleke, Gemechu and Deti, Habtewold and Mekonnen, Zeleke and Duchateau, Luc and Levecke, Bruno and Vercruysse, Jozef and D'Hondt, Matthias and Wynendaele, Evelien and De Spiegeleer, Bart},
  journal={PLoS Neglected Tropical Diseases},
  volume={8},
  number={12},
  pages={e3345},
  year={2014},
  publisher={Public Library of Science},
  doi = {10.1371/journal.pntd.0003345},
}

@misc{mqd2021,
    author =       {{The United States Pharmacopeial Convention}},
    shortauthor =  {USP},
    title =        "Medicines Quality Database",
    year =         "2021",
    note = {Data retrieved on June 12, 2021, 
          \url{apps.usp.org/app/worldwide/medQualityDatabase/}}
}

@article{wickett2023,
author = {Eugene Wickett and Matthew Plumlee and Karen Smilowitz and Souly Phanouvong and Victor Pribluda},
title = {Inferring sources of substandard and falsified products in pharmaceutical supply chains},
journal = {IISE Transactions},
volume = {56},
number = {3},
pages = {241--256},
year  = {2023},
publisher = {Taylor & Francis},
doi = {10.1080/24725854.2023.2174277},
URL = {https://doi.org/10.1080/24725854.2023.2174277},
eprint = {https://doi.org/10.1080/24725854.2023.2174277}
}

@article{chaloner1995,
  title={Bayesian Experimental Design: A Review},
  author={Chaloner, Kathryn and Verdinelli, Isabella},
  journal={Statistical Science},
  volume={10},
  number={3},
  pages={273-304},
  year={1995},
}

@InProceedings{kleingesse2019,
  title = {Efficient Bayesian Experimental Design for Implicit Models},
  author = {Steven Kleinegesse and Michael U. Gutmann},
  booktitle = {Proceedings of the Twenty-Second International Conference on Artificial Intelligence and Statistics},
  pages = 	 {476--485},
  year = 	 {2019},
  editor = 	 {Chaudhuri, Kamalika and Sugiyama, Masashi},
  volume = 	 {89},
  series = 	 {Proceedings of Machine Learning Research}
}

@article{hennig2007,
author = {Hennig, Christian and Kutlukaya, Mahmut},
year = {2007},
month = {04},
pages = {19--39},
title = {Some thoughts about the design of loss functions},
volume = {5},
journal = {REVSTAT - Statistical Journal Volume}
}

@book{Ghosh2006,
title = {Bayesian Inference and Decision Theory},
author = { Jayanta K. Ghosh and Mohan Delampady and Tapas Samanta},
title={An Introduction to Bayesian Analysis: Theory and Methods},
year={2006},
publisher={Springer New York},
address={New York, NY},
pages={29--63},
isbn={978-0-387-35433-0},
doi={10.1007/978-0-387-35433-0_2}
}

@article{wachtel2010,
author = {Wachtel, Ruth E and Dexter, Franklin},
year = {2010},
month = {06},
pages = {1698-1710},
title = {Review article: Review of behavioral operations experimental studies of newsvendor problems for operating room management},
volume = {110},
number = {6},
journal = {Anesthesia and analgesia}
}

@article{olivares2008,
author = {Olivares, Marcelo and Terwiesch, Christian and Cassorla, Lydia},
title = {Structural Estimation of the Newsvendor Model: An Application to Reserving Operating Room Time},
journal = {Management Science},
volume = {54},
number = {1},
pages = {41-55},
year = {2008},
}

@article{scipy2020,
    author = {Virtanen, Pauli and Gommers, Ralf and Oliphant, Travis E. and Haberland, Matt and Reddy, Tyler and Cournapeau, David and Burovski, Evgeni and Peterson, Pearu and Weckesser, Warren and Bright, Jonathan and {van der Walt}, St{\'e}fan J. and Brett, Matthew and Wilson, Joshua and Millman, K. Jarrod and Mayorov, Nikolay and Nelson, Andrew R. J. and Jones, Eric and Kern, Robert and Larson, Eric and Carey, C J and Polat, Ilhan and Feng, Yu and Moore, Eric W. and {VanderPlas}, Jake and Laxalde, Denis and Perktold, Josef and Cimrman, Robert and Henriksen, Ian and Quintero, E. A. and Harris, Charles R. and Archibald, Anne M. and Ribeiro, Ant{\^o}nio H. and Pedregosa, Fabian and {van Mulbregt}, Paul and {SciPy 1.0 Contributors}},  
  title   = {{{SciPy} 1.0: Fundamental Algorithms for Scientific
            Computing in Python}},
  journal = {Nature Methods},
  year    = {2020},
  volume  = {17},
  pages   = {261--272},
  adsurl  = {https://rdcu.be/b08Wh},
  doi     = {10.1038/s41592-019-0686-2},
}

@article{pribluda2014,
author = {Victor S. Pribluda and Adrian Barojas  and Veerle Coignez  and Sanford Bradby and Yanga Dijiba1  and Latifa El-Hadri1  and Mustapha Hajjou1  and Laura Krech and Souly Phanouvong and Karim Smine and Kennedy Chibwe and Patrick H. Lukulay  and Lawrence {Evans III}},
year = {2014},
month = {01},
pages = {},
title = {The Three-Level Approach: A Framework for Ensuring Medicines Quality in Limited-Resource Countries},
volume = {03},
journal = {Pharmaceutical Regulatory Affairs: Open Access},
doi = {10.4172/2167-7689.1000117}
}

@article{morgan2020,
    author = {Joshua C. Morgan and Anderson Soares Chinen and Christine Anderson-Cook and Charles Tong and John Carroll and Chiranjib Saha and Benjamin Omell and Debangsu Bhattacharyya and Michael Matuszewski and K. Sham Bhat and David C. Miller},
    title = {Development of a framework for sequential Bayesian design of experiments: Application to a pilot-scale solvent-based CO2 capture process},
    journal = {Applied Energy},
    volume = {262},
    pages = {114533},
    year = {2020},
    doi = {10.1016/j.apenergy.2020.114533},
}

@article{dawande2020,
    author = {Dawande, Milind and Qi, Anyan},
    title = {Auditing, Inspections, and Testing for Social Responsibility in Supply Networks},
    journal = {Social Science Research Network},
    year = {2020},
    doi = {10.2139/ssrn.3521582}
}

@article{zhang2022,
author = {Zhang, Han and Aydin, Goker and Parker, Rodney P.},
title = {Social Responsibility Auditing in Supply Chain Networks},
journal = {Management Science},
volume = {68},
number = {2},
pages = {1058-1077},
year = {2022},
doi = {10.1287/mnsc.2020.3950},
}

@article{huang2022,
author = {Huang, Lu and Song, Jing-Sheng and Swinney, Robert},
title = {Managing Social Responsibility in Multitier Supply Chains},
journal = {Manufacturing \& Service Operations Management},
volume = {24},
number = {6},
pages = {2843-2862},
year = {2022},
doi = {10.1287/msom.2021.1063},
}

@article{anzoom2021,
author = {Rashid Anzoom and Rakesh Nagi and Chrysafis Vogiatzis},
title = {A review of research in illicit supply-chain networks and new directions to thwart them},
journal = {IISE Transactions},
volume = {54},
number = {2},
pages = {134-158},
year  = {2022},
publisher = {Taylor & Francis},
doi = {10.1080/24725854.2021.1939466},
URL = {https://doi.org/10.1080/24725854.2021.1939466},
eprint = {https://doi.org/10.1080/24725854.2021.1939466}
}

@article{lookman2019,
author = {Turab Lookman and Prasanna V. Balachandran and Dezhen Xue and Ruihao Yuan},
title = {Active learning in materials science with emphasis on adaptive sampling using uncertainties for targeted design},
journal = {npj Computational Materials},
volume = {5},
pages = {21},
year = {2019},
doi = {10.1038/s41524-019-0153-8},
}

@article{phillips2021,
doi = {10.1088/1361-6471/abf1df},
url = {https://dx.doi.org/10.1088/1361-6471/abf1df},
year = {2021},
month = {5},
publisher = {IOP Publishing},
volume = {48},
number = {7},
pages = {072001},
author = {Daniel R. Phillips and Richard J. Furnstahl and Ulrich Heinz and T Maiti and W Nazarewicz and F M Nunes and Matthew Plumlee and M T Pratola and S Pratt and F G Viens and S M Wild},
title = {Get on the BAND Wagon: a Bayesian framework for quantifying model uncertainties in nuclear dynamics},
journal = {Journal of Physics G: Nuclear and Particle Physics},
}

@article{triepels2018,
title = {Data-driven fraud detection in international shipping},
journal = {Expert Systems with Applications},
volume = {99},
pages = {193-202},
year = {2018},
issn = {0957-4174},
doi = {https://doi.org/10.1016/j.eswa.2018.01.007},
url = {https://www.sciencedirect.com/science/article/pii/S0957417418300083},
author = {Ron Triepels and Hennie Daniels and Ad Feelders},
keywords = {International shipping, Fraud detection, Probablistic classification, Bayesian networks, Logistic regression, Neural networks},
}

@Article{yang1998,
  author={Yang, Hai and Zhou, Jing},
  title={{Optimal traffic counting locations for origin-destination matrix estimation}},
  journal={Transportation Research Part B: Methodological},
  year=1998,
  volume={32},
  number={2},
  pages={109-126},
  month={2},
  keywords={},
  doi={},
}

@InProceedings{bianco1996,
author={Bianco, Lucio
and Confessore, Giuseppe
and Reverberi, Pierfrancesco},
title={A combinatorial optimization approach to locate traffic counting points in a transport network},
booktitle={Operations Research Proceedings 1996},
year={1997},
publisher={Springer Berlin Heidelberg},
address={Berlin, Heidelberg},
pages={283--288},
isbn={978-3-642-60744-8}
}

@article{chikowe2015,
    author = {Ibrahim Chikowe  and Dorcas Osei-Safo  and Jerry J. Harrison and Daniel Y. Konadu  and Ivan Addae-Mensah },
    title = {Post-marketing surveillance of anti-malarial medicines used in Malawi},
    journal = {Malaria Journal},
    year = {2015},
    volume = {14},
    pages = {127},
    doi = {10.1186/s12936-015-0637-z},
}

@article{tsiligirides1984,
 ISSN = {01605682, 14769360},
 URL = {http://www.jstor.org/stable/2582629},
 abstract = {A general description of the sport of orienteering in its two forms known as orienteering event and score orienteering event is given. The similarities between the latter and the generalized travelling salesman problem are underlined, and a common mathematical model is deduced. A `good', if not `optimal' solution of the corresponding problem by means of an efficient and computationally feasible method is derived and discussed. This solution is obtained by the use of either of two algorithms based on approximate methods. The two algorithms are given, and their efficiency is tested on a number of illustrative problems. Finally, a comparison between the two forms of the sport of orienteering is made.},
 author = {T Tsiligirides},
 journal = {The Journal of the Operational Research Society},
 number = {9},
 pages = {797--809},
 publisher = {Palgrave Macmillan Journals},
 title = {Heuristic Methods Applied to Orienteering},
 volume = {35},
 year = {1984}
}

@phdthesis{wickettdissertation,
  author  = {Eugene Wickett},
  title   = {Post-Market Surveillance: Using Supply-Chain Information to Facilitate Regulatory Actions in Medical Product Quality Assurance},
  school  = {Northwestern University},
  year    = {2024},
  month   = {3},
  url    = {https://www.proquest.com/dissertations-theses/post-market-surveillance-using-supply-chain/docview/3020635145/se-2},
}

@phdthesis{wickettdissertationblind,
  author  = {[\textit{author(s)}]},
  title   = {Post-Market Surveillance: Using Supply-Chain Information to Facilitate Regulatory Actions in Medical Product Quality Assurance},
  school  = {[\textit{University}]},
  year    = {2024},
  month   = {3},
  url    = {https://www.proquest.com/dissertations-theses/post-market-surveillance-using-supply-chain/docview/3020635145/se-2},
}


\clearpage
\begin{appendices}

\section{Table of notation} \label{SuppMat:notationTable}

\newcolumntype{j}{X}
\newcolumntype{s}{>{\hsize=.3\hsize}X}
\newcommand{\secspace}{0.3cm}
\newcommand{\secspacetwo}{0.1cm}

\begin{table}[h!]
\footnotesize
\caption{ { Notation used in this paper.} } 

\begin{tabularx}{\textwidth}{@{} s j @{}}

\toprule
    
    \textbf{Sets} \\[\secspacetwo]
    $\tnSet$ & Set of test nodes \\
    $\snSet$ & Set of supply nodes \\[\secspace]
    
    \textbf{Decision variables} \\[\secspacetwo]
    $\allocVec \in (\mathbb{Z}^{+})^{|\tnSet|}$ & Vector denoting the sampling plan, where $\allocVec_\tnPoint$ is the number of tests allocated to $\tnPoint\in\tnSet$ \\[\secspace]
  
    \textbf{Given parameters} \\[\secspacetwo]
    $\numTests\in\posIntSet$ & Number of prior tests \\
    $\numTestsNext\in\posIntSet$ & Budget, in tests, for the next PMS iteration \\
    $\bm{\testResultpoint}\in\{0,1\}^{\numTestsNext}$ & Vector of binary test results, where $\testResultpoint_i$ is the result of the $i^{\text{th}}$ test \\
    $\bm{\tnPoint}\in\tnSet^{\numTestsNext}$ & Vector of test-node labels, where $\bm\tnPoint_i$ is the label associated with the $i^{\text{th}}$ test \\
    $\bm{\snPoint}\in\snSet^{\numTestsNext}$ & Vector of supply-node labels, where $\bm\snPoint_i$ is the label associated with the $i^{\text{th}}$ test \\
    $\numTestsMat\in(\posIntSet)^{|\tnSet|\times|\snSet|}$ & Matrix of prior tests along each trace, where $\numTestsMat_{\tnPoint\snPoint}$ is the number of tests traversing supply node $\snPoint$ and test node $\tnPoint$ \\
    $\testResultMat\in(\posIntSet)^{|\tnSet|\times|\snSet|}$ & Matrix of prior SFPs along each trace, where $\testResultMat_{\tnPoint\snPoint}$ is the number of observed SFPs traversing supply node $\snPoint$ and test node $\tnPoint$ \\
    $\dataSet$ & Prior PMS data set \\
    $\dataSetNext$ & Prospective PMS data set \\
    $\tnRates\in(0,1)^{|\tnSet|}$ & Vector of test-node SFP rates, where $\tnRates_\tnPoint$ is the SFP rate at test node $\tnPoint$; can likewise be defined on the space of reals, $\realSet^{|\tnSet|}$, through a logit transformation \\
    $\snRates\in(0,1)^{|\snSet|}$ & Vector of supply-node SFP rates, where $\snRates_\snPoint$ is the SFP rate at supply node $\snPoint$; can equivalently be defined on the space of reals, $\realSet^{|\snSet|}$ \\
    $\tnsnRates\in(0,1)^{|\tnsnSet|}$ & Vector of SFP rates at test nodes and supply nodes; can equivalently be defined on the space of reals, $\realSet^{|\tnsnSet|}$ \\[\secspace]
     
    \textbf{User-defined \hspace{20pt} parameters} \\[\secspacetwo]
    $\threshold\in(0,1)$ & Classification threshold for significantly high SFP rates  \\
    $\underEstWt\in\realSet^{+}$ & Underestimation penalty relative to overestimation \\
    $\checkRiskSlope\in[0,1]$ & Tuning parameter for the assessment weight \\[\secspace]

    \textbf{Functions} \\[\secspacetwo]
    $\consolSFPfunc_{\tnPoint\snPoint}(\tnRates,\snRates)$ & Consolidated SFP rate along trace $(\tnPoint,\snPoint)$ for SFP rates $(\tnRates,\snRates)$, or the probability of being SFP for a sample traversing $\tnPoint$ and $\snPoint$ \\
    $\estVal(\tnsnRatesEst,\tnsnRates)$   & Loss for SFP-rate estimate $\tnsnRatesEst$ under latent SFP rates $\tnsnRates$ \\
    $\score(\tnsnRatesEst_\tnsnPoint,\tnsnRates_\tnsnPoint)$   & Score for SFP-rate estimate $\tnsnRatesEst_\tnsnPoint$ under latent SFP rate $\tnsnRates_\tnsnPoint$ \\
    $\weight(\tnsnRates_\tnsnPoint)$  &  Weight for the estimation importance of latent SFP rates $\tnsnRates_\tnsnPoint$ \\
    $\classFunc(\tnsnRates_\tnsnPoint)$ & Classification of node $\tnsnPoint$ as either a significant or insignificant source of SFPs, as a function of threshold $\threshold$ and SFP rate $\tnsnRates_\tnsnPoint$  \\
    $\util (\allocVec)$ & Utility, or expected reduction in loss, for sampling plan $\allocVec$ \\
    
\bottomrule
\end{tabularx}

\end{table}

\normalsize
\clearpage
\section{Estimating sampling plan utility} \label{SuppMat:calcUtil}
Calculating utility using standard Markov Chain Monte Carlo (MCMC) integration is computationally prohibitive for many low-resource contexts. 
Our efficient method leverages the binary nature of PMS data to produce estimates of plan utility that are computed considerably faster than the utility estimates estimated through MCMC integration.
Synthesis with importance sampling addresses estimate bias in settings with many nodes and many tests.

\subsection{MCMC integration for plan utility} \label{subsec:approxDesignUtil}
The utility $U(\allocVec)$ defined in Section \ref{sec:samplingDesigns} is the difference between $\expecOp\estVal\big[(\tnsnRatesEst(\allocVec),\tnsnRates\big]$ and $\expecOp\estVal\big[(\tnsnRatesEst(\bm{0}),\tnsnRates\big]$, i.e., the utility is the difference in expected loss obtained when deploying sampling plan $\allocVec$ versus conducting no additional tests.
Using the Bayesian experimental construction of \citet{chaloner1995}, the expected loss can be expressed as
\begin{equation} \label{eq:mathFormOfLoss}
    \expecOp\estVal\big[(\tnsnRatesEst(\allocVec),\tnsnRates\big] = 
    \int_{\tnsnRates \in \realSet^{|\tnSet|+|\snSet|}}
    \inf_{\Tilde\tnsnRates(\cdot)}
    \int_{\dataSetNext\in{\mathcal{D}(\allocVec)}} \estVal\big[\Tilde\tnsnRates(\dataSet,\dataSetNext),\tnsnRates\big] p(\tnsnRates|\dataSet,\dataSetNext) p(\dataSetNext|\dataSet) 
    \hspace{4pt} \mathrm{d}\dataSetNext \mathrm{d}\tnsnRates\,,
\end{equation}
where $\mathcal{D}(\allocVec)$ is the set of all possible data sets resulting from plan $\allocVec$, and $\dataSet$ is the set of existing data.
Each member of $\mathcal{D}(\allocVec)$ is a set of $\numTestsNext$ test results with associated test node-supply node traces.
For each set of SFP rates, $\tnsnRates$, and new data set, $\dataSetNext\in\mathcal{D}(\allocVec)$, the associated loss $\estVal\big[\Tilde\tnsnRates(\dataSet,\dataSetNext),\tnsnRates\big]$ is weighted by the likelihood of the new data set given the existing data, $p(\dataSetNext|\dataSet)$, as well as the likelihood of the SFP rates given the new data and existing data, $p(\tnsnRates|\dataSet,\dataSetNext)$.
The infimum captures the use of a Bayes estimate, the loss-minimizing SFP-rate estimate under $\dataSetNext$.
Calculation of $\expecOp\estVal\big[(\tnsnRatesEst(\allocVec),\tnsnRates\big]$ requires integration over possible $\dataSetNext$ and $\tnsnRates$.
This integration requires posteriors $p(\tnsnRates|\dataSet)$ and $p(\tnsnRates|\dataSet,\dataSetNext)$. 
These posteriors can be challenging to calculate directly: \citet{kleingesse2019} noted that Monte Carlo integration is standard for estimating expressions like $\expecOp\estVal\big[(\tnsnRatesEst(\allocVec),\tnsnRates\big]$.
The posterior of \citet{wickett2023} is also computed through MCMC sampling.

\def\lc{\left\lceil}   
\def\rc{\right\rceil}
\begin{algorithm}[!ht]
    \DontPrintSemicolon
        \KwInput{$\allocVec$, $\tnSet$, $\snSet$, $\sourcingMat$, $\dataSet$, $\estVal(\cdot,\cdot)$, $\algparamone$, $\algparamtwo$}
        \KwOutput{Estimate of $\expecOp\big[\estVal(\tnsnRatesEst(\allocVec),\tnsnRates)\big]$}
        \SetKwProg{Fn}{Function}{:}{}
        Set $\truthdraws:=$ set of $\algparamone$ MCMC draws from $p(\tnsnRates|\dataSet)$. \\
        \For{each $j$ in $\{1,\dots,\algparamtwo\}$}{
        Sample $\tnsnRates^{(j)}$ from $\truthdraws$. \\
            \For{each $\tnPoint$ in $\tnSet$}{
                \For{each $i$ in $\{1,\dots,\allocVec_{a}\}$ }{
                    Set $\bm{a}_i:=a$, and sample $\bm{b}_i$ according to $\sourcingMat_{\tnPoint}$.\\
                    Sample $\bm{\testResultpoint}_i$ from Bern$\big[\consolInaccFunc_{\bm{a}_i\bm{b}_i}(\tnsnRates^{(j)})\big]$.
                }             
            }
            Set $\dataSetNext:=(\testResultVecNext, \bm{\tnPoint}, \bm{\snPoint})$. \\
            Generate $\mcmcdatadraws_j$, a set of $\algparamone$ MCMC draws from $p(\tnsnRates|\dataSet,\dataSetNext)$. \\
            Set $\tnsnRatesEst:=\argmin\limits_{\Tilde\tnsnRates} \sum\limits_{\tnsnRates\in\mcmcdatadraws_j}\estVal(\Tilde\tnsnRates,\tnsnRates)\,.$ \\
            Set $\bm{L}_{j}:=\frac{1}{|{\mcmcdatadraws_j}|}\sum\limits_{\tnsnRates\in\mcmcdatadraws_j}\estVal(\tnsnRatesEst, \tnsnRates)$.
        }
        Return $\frac{1}{\algparamtwo}\sum\limits_{j\in\{1,\dots,\algparamtwo
        \}}\bm{\estVal}_{j}$.
    \caption{MCMC estimation of $\expecOp[\estVal(\tnsnRatesEst(\allocVec),\tnsnRates)]$}
    \label{alg:mcmcIntegration}
\end{algorithm}

Our Monte Carlo integration procedure for $\expecOp\big[\estVal(\tnsnRatesEst(\allocVec),\tnsnRates)\big]$ is given in Algorithm \ref{alg:mcmcIntegration}.
The expected loss is approximated through three key sets: an MCMC set of SFP-rate vectors drawn from the posterior for the SFP rates under existing data; a set of new data sets simulated using these posterior draws; and an MCMC set of SFP-rate vectors drawn from the posterior for the SFP rates under existing and new data.
An SFP-rate draw from the posterior under existing data, $p(\tnsnRates|\dataSet)$, is taken as the latent truth.
New data, $\dataSetNext$, are simulated under this latent truth as well as the sampling plan, $\allocVec$.
Draws from a posterior under these new data and existing data, $p(\tnsnRates|\dataSet,\dataSetNext)$, are used to identify a Bayes estimate and the loss associated with the Bayes estimate.
Averaging the loss across latent SFP-rate draws approximates the expected loss.
The key algorithm parameters are the size of the sets of SFP-rates, $\algparamone$, and the number of simulated new data sets, $\algparamtwo$.
Low values for $\algparamone$ induce bias: Bayes estimates are identified using an insufficient representation of $p(\tnsnRates|\dataSet)$, such that the true expected loss is underestimated.
Low values for $\algparamtwo$ induce variance: the range of possible data sets is insufficiently explored.
Accordingly, both larger PMS budgets and settings with lower levels of initial data require a higher $\algparamtwo$ to achieve a desired variance: the range of possible data sets is larger.

Algorithm \ref{alg:mcmcIntegration} uses as inputs the sampling plan allocation vector ($\allocVec$), the sets of test and supply nodes $(\tnSet,\snSet)$, the sourcing-probability matrix $(\sourcingMat)$, existing data $(\dataSet)$, a specified loss $(\estVal)$, the size of the MCMC sets of SFP rates $(\algparamone)$, and a number of data sets to generate $(\algparamtwo)$.
Each of $\algparamtwo$ data sets is a single simulation $\dataSetNext$ of the data obtained under plan $\allocVec$.
The $\algparamone$ MCMC draws from $p(\tnsnRates|\dataSet)$ or $p(\tnsnRates|\dataSet,\dataSetNext)$ enable numerical integration.

New data generation uses $\tnsnRates^{(j)}$, a vector of SFP rates capturing the latent truth, and sampling plan $\allocVec$, where $\allocVec_\tnPoint$ is the number of tests allocated to test node $\tnPoint\in\tnSet$. 
Products at $\tnPoint$ are sourced from supply nodes in $\snSet$ according to sourcing-probability matrix $\sourcingMat$. 
Element $\sourcingMat_{\tnPoint\snPoint}\in[0,1]$ is the probability that a product from $\tnPoint$ is sourced from supply node $\snPoint\in\snSet$. $\sum_{\snPoint\in\snSet}\sourcingMat_{\tnPoint\snPoint}=1$ for each $\tnPoint\in\tnSet$.
The supply node associated with each test $i$ at test node $\tnPoint$ is drawn according to the sourcing probabilities of vector $\sourcingMat_\tnPoint$. 
A Bernoulli test result $\bm{y}_i$ is generated for test $i$ from the consolidated SFP probability of (\ref{eq:consolSFPProb}); the consolidated SFP probability is the SFP likelihood when accounting for test-node and supply-node SFP rates.


The average loss across all $\algparamtwo$ simulations yields the estimate of $\expecOp\big[\estVal(\tnsnRatesEst(\allocVec),\tnsnRates)\big]$.
An approximate $(1-\alpha)\%$ confidence interval for this average loss can be formed via the standard Central Limit Theorem formulation
\begin{equation*}
   \frac{1}{\algparamtwo}\sum_{j\in\{1,\dots,\algparamtwo\}}\bm{\estVal}_{j} \pm z_{1-\frac{\alpha}{2}}\frac{\hat{\sigma}}{\sqrt{\algparamtwo}}  \,,
\end{equation*}
where $z_{1-\frac{\alpha}{2}}$ is the standard normal value at the $(1-\frac{\alpha}{2})^{\text{th}}$ quantile and $\hat\sigma$ is the standard deviation of observed simulation utility realizations.
This confidence interval can be similarly applied under the efficient and importance sampling algorithms in Sections \ref{subsec:fastalg} and \ref{subsec:importancesampling}, and is displayed in the figures of the case study of Section \ref{sec:casestudy}.

Algorithm \ref{alg:mcmcIntegration} can consume substantial computational resources. 
$\algparamone$ MCMC samples are needed for each of $\algparamtwo$ simulations, and each round of MCMC sampling takes a few minutes to conduct with an Intel Core 1.6GHz processor.
Using $\algparamone=10,000$ and $\algparamtwo=1,000$ requires about 16 hours to complete an expected loss estimate for one plan designation of $\allocVec$ in the six-node example of Section \ref{subsec:designUtilExample}.

\subsection{A faster algorithm for plan utility} \label{subsec:fastalg}

A key factor in the long processing time of Algorithm \ref{alg:mcmcIntegration} is the need for repeated MCMC sampling of the posterior after the generation of each new data set.
Reformulating the expression $p(\tnsnRates|\dataSet,\dataSetNext) p(\dataSetNext|\dataSet)$ of (\ref{eq:mathFormOfLoss}) yields a useful equivalence:
\begin{equation*}
\resizebox{.99\hsize}{!}{$
     p(\tnsnRates|\dataSet,\dataSetNext) p(\dataSetNext|\dataSet) = \frac{ p(\tnsnRates,\dataSet,\dataSetNext) p(\dataSet,\dataSetNext)}{p(\dataSet,\dataSetNext)p(\dataSet)}  = 
     \frac{p(\dataSetNext|\dataSet,\tnsnRates) p(\dataSet,\tnsnRates)}{p(\dataSet)} = p(\tnsnRates|\dataSet) p(\dataSetNext|\dataSet,\tnsnRates) \,.
$}
\end{equation*}
Conditional on $\tnsnRates$, $\dataSet$ and $\dataSetNext$ are independent; thus,
\begin{equation} \label{eq:postRewrite}
p(\tnsnRates|\dataSet,\dataSetNext) p(\dataSetNext|\dataSet) = p(\tnsnRates|\dataSet) p(\dataSetNext|\tnsnRates) \,.    
\end{equation}
The first term on the right-hand side, $p(\tnsnRates|\dataSet)$, is the posterior for SFP rates $\tnsnRates$ under existing data $\dataSet$.
The second term, $p(\dataSetNext|\tnsnRates)$, is the binomial likelihood of a data set for a given set of SFP rates:
\begin{equation} \label{eq:binomlike}
    \prod\limits_{\tnPoint\in\tnSet} \prod\limits_{\snPoint \in\snSet}\consolInaccFunc_{\tnPoint\snPoint}(\tnsnRates)^{{\testResultNextMat}_{\tnPoint\snPoint}}  \bigl[1-\consolInaccFunc_{\tnPoint\snPoint}(\tnsnRates) \bigr]^{{\numTestsNextMat}_{\tnPoint\snPoint}-{{\testResultNextMat}_{\tnPoint\snPoint}}} \genfrac(){0pt}{}{{\numTestsNextMat}_{\tnPoint\snPoint}}{{\testResultNextMat}_{\tnPoint\snPoint}}\,, 
\end{equation}
where $\numTestsNextMat_{\tnPoint\snPoint}$ and $\testResultNextMat_{\tnPoint\snPoint}$ are the respective numbers of tests and SFP positives in $\dataSetNext$ along trace $(\tnPoint,\snPoint)$. 

The equivalence of (\ref{eq:postRewrite}) feeds an efficient approximation for the expected loss.
Let $\Gamma\sim p(\tnsnRates|\dataSet)$ be a set of MCMC draws under $\dataSet$, let $\tnsnRates^{(i)}\in\Gamma$ be the $i^{\text{th}}$ member of $\Gamma$, let $\mathcal{D}(\allocVec)$ be the set of possible data sets generated by plan $\allocVec$, and let $\mathcal{D}_{\Gamma}(\allocVec)$ be a set of $|\Gamma|$ data sets randomly generated under plan $\allocVec$ for each SFP-rate vector in $\Gamma$.
Substituting the right-hand side of (\ref{eq:postRewrite}) into (\ref{eq:mathFormOfLoss}) yields an approximation of the expected loss that can be calculated without repeated MCMC sampling:

\begin{align} \label{eq:approxFormOfLoss}
    \mathbbm{E}\estVal\big[(\tnsnRatesEst(\allocVec),\tnsnRates\big] &= 
    \int_{\dataSetNext\in{\mathcal{D}(\allocVec)}}
    \inf_{\tnsnRatesEst(\cdot)}
    \int_{\tnsnRates \in \realSet^{|\tnSet|+|\snSet|}} 
     \estVal\big[\tnsnRatesEst(\dataSet,\dataSetNext),\tnsnRates\big] p(\tnsnRates|\dataSet) p(\dataSetNext|\tnsnRates) 
    \hspace{4pt} \mathrm{d}\tnsnRates \mathrm{d}\dataSetNext \\
    &\approx \int_{\dataSetNext\in{\mathcal{D}(\allocVec)}} \inf_{\tnsnRatesEst(\cdot)} \label{eq:step1}
    \frac{1}{|\Gamma|}
    \sum\limits_{i\in\{1,\dots,|\Gamma|\}} 
    \estVal\big[\tnsnRatesEst(\dataSet,\dataSetNext),\tnsnRates^{(i)}\big]  p(\dataSetNext|\tnsnRates^{(i)})
    \hspace{4pt} \mathrm{d}\dataSetNext \\
    &\leq \int_{\dataSetNext\in{\mathcal{D}(\allocVec)}}  \min_{\Tilde\tnsnRates:\mathcal{D}(\allocVec)\rightarrow \realSet^{|\tnSet|+|\snSet|} } \label{eq:step2}
    \frac{1}{|\Gamma|}
    \sum\limits_{i\in\{1,\dots,|\Gamma|\}} 
    \estVal\left[\Tilde\tnsnRates(\dataSetNext),\tnsnRates^{(i)}\right]  p(\dataSetNext|\tnsnRates^{(i)})
    \hspace{4pt} \mathrm{d}\dataSetNext \\
    &\approx \sum\limits_{\dataSetNext^{(j)}\in\mathcal{D}_{\Gamma}(\allocVec)}
    \min_{\Tilde\tnsnRates:\mathcal{D}(\allocVec)\rightarrow \realSet^{|\tnSet|+|\snSet|}} \label{eq:step3}
    \frac{1}{|\Gamma|}
    \sum\limits_{i\in\{1,\dots,|\Gamma|\}}
    \estVal\left[\Tilde\tnsnRates(\dataSetNext^{(j)}),\tnsnRates^{(i)}\right]  p(\dataSetNext^{(j)}|\tnsnRates^{(i)}) \,.
\end{align} 
The approximation in (\ref{eq:step1}) proceeds from substituting the initial set of posterior MCMC draws of SFP rates.
The inequality in (\ref{eq:step2}) results from using MCMC draws of SFP rates to identify the Bayes estimate: these draws are a finite set and do not fully capture the posterior.
The second approximation in (\ref{eq:step3}) follows from using sets of generated data instead of integrating over all possible data sets.
The final expression can be produced with one set of posterior MCMC draws of SFP rates and computationally inexpensive binomial likelihoods. 

\def\lc{\left\lceil}   
\def\rc{\right\rceil}
\begin{algorithm}[!ht]
    \DontPrintSemicolon
        \KwInput{$\allocVec$, $\tnSet$, $\snSet$, $\sourcingMat$, $\dataSet$, $\estVal(\cdot,\cdot)$, $\algparamone$, $\algparamtwo$}
        \KwOutput{Estimate of $\expecOp\big[\estVal(\tnsnRatesEst(\allocVec),\tnsnRates)\big]$}
        \SetKwFunction{FMain}{MakeEstimate}
        \SetKwFunction{FSub}{VecTNSamples}
        \SetKwFunction{FSubc}{PossibleNSets}
        \SetKwFunction{FSubd}{PossibleData}
        \SetKwFunction{FbuildW}{BuildDatasimMatrix}
        \SetKwProg{Fn}{Function}{:}{}
        \Fn{\FbuildW{$\truthdraws$,  $\datadraws$, $\allocVec$, $\tnSet$, $\snSet$, $\sourcingMat$, $\algparamone$, $\algparamtwo$}}{
            Initialize $\binomLikeMat\in\realSet^{\algparamone\times \algparamtwo}$.\\
            \For{$j$ in $\{1,\dots,\algparamtwo\}$}{
                Generate data $\dataSetNext^{(j)}:=(\bm{y},\bm{a},\bm{b})$ according to $\allocVec$, $\sourcingMat$ and $\tnsnRates^{(j)}\in\datadraws$. \\
                \For{$\tnPoint\in\tnSet$, $\snPoint\in\snSet$}{
                    Set $\numTestsNextMat_{\tnPoint\snPoint}:=\sum\limits_{i\in\{1,\dots,\numTestsNext\}}\mathbbm{1}\{\bm{a}_i=a,\bm{b}_i=b\}$.\\
                    Set $\testResultNextMat_{\tnPoint\snPoint}:=\sum\limits_{i\in\{1,\dots,\numTestsNext\}}\mathbbm{1}\{\bm{a}_i=a,\bm{b}_i=b,\bm{y}_i=1\}$. 
                }
                \For{$\tnsnRates^{(i)}\in\truthdraws$}{
                    Set $\binomLikeMat_{ij}:=\prod\limits_{\tnPoint\in\tnSet}\prod\limits_{\snPoint\in\snSet}\consolInaccFunc_{\tnPoint\snPoint}\big(\tnsnRates^{(i)}\big)^{{ \testResultNextMat}_{\tnPoint\snPoint}}  \big[1-\consolInaccFunc_{\tnPoint\snPoint}\big(\tnsnRates^{(i)}\big)\big]^{{\numTestsNextMat}_{\tnPoint\snPoint}-{{\testResultNextMat}_{\tnPoint\snPoint}}}\genfrac(){0pt}{}{{\numTestsNextMat}_{\tnPoint\snPoint}}{{\testResultNextMat}_{\tnPoint\snPoint}}$.
                }
                Normalize column $\binomLikeMat_{\cdot j}$ to sum to 1.
            }
            Return $\binomLikeMat$.
        }
        Set $\truthdraws, \datadraws:=\algparamone,\algparamtwo$ MCMC draws from $p(\tnsnRates|\dataSet)$. \\
        Get $\binomLikeMat:=$\texttt{BuildDatasimMatrix}($\truthdraws$,  $\datadraws$, $\allocVec$, $\tnSet$, $\snSet$, $\sourcingMat$, $\algparamone$, $\algparamtwo$).\\
        \For{$j$ in $\{1,\dots,\algparamtwo\}$}{
            Set  $\bm{u}_j = \min_{\tnsnRatesEst} \bigg\{ \sum_{i=1}^{\algparamone}\binomLikeMat_{ij} L\big(\tnsnRatesEst, \tnsnRates^{(i)}\big) \bigg\}$
        }
        Return $\frac{1}{\algparamtwo}\sum_{j=1}^{\algparamtwo} \bm{u}_{j}$.
    \caption{Efficient estimation of $\expecOp\big[\estVal\big(\tnsnRatesEst(\allocVec),\tnsnRates\big)\big]$}
    \label{alg:utilApproxNodeFast}
\end{algorithm}

Algorithm \ref{alg:utilApproxNodeFast} leverages the approximation of (\ref{eq:step3}) to produce an efficient estimate of the expected loss.
This algorithm uses the same inputs as Algorithm \ref{alg:mcmcIntegration}.
A single MCMC set of SFP-rate vectors and a set of of simulated data sets are used.
The equivalence of (\ref{eq:postRewrite}) means that incorporating the likelihood of each SFP rate with respect to each simulated data set is sufficient for approximating the expected loss: no further MCMC sampling is needed.
A key matrix for this algorithm, $\binomLikeMat$, stores these likelihoods: each row corresponds to an SFP-rate vector and each column corresponds to a simulated data set.
A Bayes estimate is derived for each simulated data set through a weighting of the loss corresponding to each SFP-rate vector by the binomial likelihoods in $\binomLikeMat$.
The set of weighted losses for each simulated data set are averaged to yield the expected loss.
In Line 10, the data likelihoods are normalized column-wise to sum to one, as  the data are generated from members of $\datadraws$ that are distributed according to $p(\tnsnRates|\dataSet)$. 
Failure to normalize would disproportionately affect data sets corresponding to different members of $\datadraws$.
In Lines 14 to 15, the minimum loss $\bm{u}_j$ for each simulated data set $\dataSetNext^{(j)}$ is identified.
Bayes estimate $\tnsnRatesEst$ minimizes the loss across the members of $\truthdraws$ approximating the distribution $p(\tnsnRates|\dataSet)$, weighted for the binomial likelihoods in column $\binomLikeMat_{\cdot j}$.
Losses using the scores described in Sections \ref{subsubsec:objClass} and \ref{subsubsec:objAssess} have analytical minimizers, shown in Appendix \ref{SuppMat:bayesEstimates}; thus, Bayes estimates under these scores can be found through an indexing of the weighted losses.   
Our Python implementation uses these analytical solutions or, for other scores, the low-memory Broyden-Fletcher-Goldfarb-Shanno (L-BFGS) minimization algorithm available in SciPy \parencite{scipy2020}. 

The principal parameters of Algorithm \ref{alg:utilApproxNodeFast} are $\algparamone$, the number of MCMC draws for representing $p(\tnsnRates|\dataSet)$, and $\algparamtwo$, the number of data sets to simulate.
Computational experiments show it is beneficial to expend more resources on $\truthdraws$ than $\datadraws$; see
\if0\blind{\citet{wickettdissertation} }\fi
\if1\blind{\citet{wickettdissertationblind} }\fi
for the impact of different $\algparamone$ and $\algparamtwo$ choices on utility estimates.
Low values for $\algparamone$ induce bias: identified minima insufficiently account for a full representation of $p(\tnsnRates|\dataSet)$, such that the true expected loss is underestimated.
This bias will tend to increase with larger PMS budgets, as the Bayes estimate is more likely to move to areas of the parameter space less represented in $\truthdraws$. 
Low values for $\algparamtwo$ induce variance: the range of possible data sets is insufficiently explored.
We suggest choosing $\algparamone$ as large as computational resources will allow, and $\algparamtwo$ as large as needed to meet variance tolerance.

Algorithm  \ref{alg:utilApproxNodeFast} runs significantly faster than the MCMC integration of Algorithm \ref{alg:mcmcIntegration}.
The execution of full MCMC integration needs multiple hours; this efficient implementation calculates around four simulated data sets per second for our case study.
Our case study uses $\algparamone=75,000$ and $\algparamtwo=2,000$, as these values produce consistent allocation decisions
\if0\blind{\parencite{wickettdissertation}. }\fi
\if1\blind{\parencite{wickettdissertationblind}. }\fi
This processing rate means utility estimation takes about 10 minutes on a laptop with an Intel Core 1.6GHz processor.

\subsection{Using importance sampling to reduce bias and variance} \label{subsec:importancesampling}

The quality of utility estimates produced through the efficient estimation algorithm is impacted by four key factors: the size of the set of MCMC draws from $p(\tnsnRates|\dataSet)$, $\algparamone$; the size of the set of MCMC draws simulating data sets, $\algparamtwo$; the number of nodes, $|\tnSet|+|\snSet|$; and the number of tests in a prospective sampling plan, $\numTestsNext$
\if0\blind{\parencite{wickettdissertation}. }\fi
\if1\blind{\parencite{wickettdissertationblind}. }\fi
A sufficiently high number of nodes or number of tests means a small subset of SFP-rate vectors in the MCMC set will have a large impact on the resulting loss estimate, i.e., the significant region of SFP rates will be underrepresented.

The significant region of SFP rates can be be better represented through importance sampling.
We form loss estimates through an additional MCMC set of SFP-rate vectors better capturing this significant region, drawn using an ``expected'' set of traces and test results.
These loss estimates are then adjusted to account for the oversampling of the significant region. 
Let $\dataSetImp$ be the expected data set under sampling plan $\allocVec$.
Rewriting the posterior expression in (\ref{eq:postRewrite}) yields the basis for importance sampling:
\begin{equation} \label{eq:postImportance}
    p(\tnsnRates|\dataSet) p(\dataSetNext|\tnsnRates) = p(\tnsnRates|\dataSet, \dataSetImp) \frac{p(\tnsnRates|\dataSet)}{p(\tnsnRates|\dataSet, \dataSetImp)} p(\dataSetNext|\tnsnRates)\,.
\end{equation}
The first term on the right-hand side of ($\ref{eq:postImportance}{}$), $p(\tnsnRates|\dataSet, \dataSetImp)$, is the basis for new MCMC generation of SFP-rate vectors.
The second term, $\frac{p(\tnsnRates|\dataSet)}{p(\tnsnRates|\dataSet, \dataSetImp)}$, indicates scaling of the new SFP-rate vectors into the target space characterized by $\dataSet$.
The third term is the binomial likelihood of a data set, as in (\ref{eq:binomlike}). 

Sampling plan loss is estimated similarly to the efficient method of Algorithm \ref{alg:utilApproxNodeFast} by substituting an MCMC set of SFP-rate vectors generated under $\dataSet$ and $\dataSetImp$ for the MCMC set generated only under $\dataSet$.
The expected data set, $\dataSetImp$, is an assumed data set that allows a better centering of generated SFP-rate vectors at a parameter space important to sampling plan evaluation.
We build $\dataSetImp$ through repeated simulation of the data set resulting from implementation of $\allocVec$.

\def\lc{\left\lceil}   
\def\rc{\right\rceil}
\begin{algorithm}[!ht]
    \DontPrintSemicolon
        \KwInput{$\allocVec$, $\tnSet$, $\snSet$, $\sourcingMat$, $\dataSet$, $\estVal(\cdot,\cdot)$, $\algparamone$, $\algparamtwo$, $\algparamthree$}
        \KwOutput{Estimate of $\expecOp\big[\estVal(\tnsnRatesEst(\allocVec),\tnsnRates)\big]$}
        \SetKwFunction{FMain}{MakeEstimate}
        \SetKwFunction{FSub}{VecTNSamples}
        \SetKwFunction{FSubc}{PossibleNSets}
        \SetKwFunction{FSubd}{PossibleData}
        \SetKwFunction{FbuildW}{BuildDatasimMatrix}
        \SetKwFunction{Fbuildimpdata}{BuildImportanceData}
        \SetKwProg{Fn}{Function}{:}{}
        \Fn{\Fbuildimpdata{$\truthdraws$, $\allocVec$, $\sourcingMat$, $\algparamthree$}}{
            Initialize $\mathcal{N}:=\emptyset$,  $\mathcal{Y}:=\emptyset$, $\numTestsNextMat^{\text{imp}}\in(\posIntSet)^{\tnSet\times\snSet}$ and $\testResultNextMat^{\text{imp}}\in(\posIntSet)^{\tnSet\times\snSet}$. \\
            \For{$k\in\{1,\dots,\algparamthree\}$} {
                Simulate $\numTestsNextMat^{(k)}$ according to $\allocVec$ and $\sourcingMat$, where $\numTestsNextMat^{(k)}_{\tnPoint\snPoint}$ is the number of tests observed along trace $(\tnPoint,\snPoint)$. \\
                Add $\numTestsNextMat^{(k)}$ to $\mathcal{N}$.
            }
            \For{$\numTestsNextMat^{(k)}\in\mathcal{N}$}{
                Randomly draw $\tnsnRates^{(k)}$ from $\truthdraws$.\\
                Simulate $\testResultNextMat^{(k)}$ according to $\numTestsNextMat^{(k)}$ and $\tnsnRates^{(k)}$, where $\testResultNextMat^{(k)}_{\tnPoint\snPoint}$ is the number of positive tests observed along trace $(\tnPoint,\snPoint)$.\\
                Add $\testResultNextMat^{(k)}$ to $\mathcal{Y}$.
            }
            \For{$\tnPoint\in\tnSet$, $\snPoint\in\snSet$}{
                Set $\numTestsNextMat^{\text{imp}}_{\tnPoint\snPoint}$ to the average number of tests for trace $(\tnPoint,\snPoint)$ across the members of $\mathcal{N}$, rounded to the nearest integer.\\
                Set $\testResultNextMat^{\text{imp}}_{\tnPoint\snPoint}$ to the average number of positive tests for trace $(\tnPoint,\snPoint)$ across the members of $\mathcal{Y}$, rounded to the nearest integer.
            }
            Return $\dataSetImp$, formed using $\numTestsNextMat^{\text{imp}}$ and $\testResultNextMat^{\text{imp}}$.
        }
        Set $\truthdraws, \datadraws:=\algparamone,\algparamtwo$ MCMC draws from $p(\tnsnRates|\dataSet)$. \\
        Get $\dataSetImp:=$ \texttt{BuildImportanceData}($\truthdraws$, $\allocVec$, $\sourcingMat$, $\algparamthree$).\\
        Set $\impdraws=\{\tnsnRates^{(i)}\}_{i=1}^{\algparamone}:=\algparamone$ MCMC draws from  $p(\tnsnRates|\dataSet,\dataSetImp)$.\\
        Get $\binomLikeMat:=$\texttt{BuildDatasimMatrix}($\impdraws$,  $\datadraws$, $\allocVec$, $\tnSet$, $\snSet$, $\sourcingMat$, $\algparamone$, $\algparamtwo$).\\
        \For{$j$ in $\{1,\dots,\algparamtwo\}$}{
            Set  $\bm{u}_j = \min_{\tnsnRatesEst} \bigg\{ \sum_{i=1}^{\algparamone}\binomLikeMat_{ij} L\big(\tnsnRatesEst, \tnsnRates^{(i)}\big) \frac{ p(\tnsnRates^{(i)}|\dataSet)}{p(\tnsnRates^{(i)}|\dataSet,\dataSetImp)} \bigg\}$
        }
        Return $\frac{1}{\algparamtwo}\sum_{j=1}^{\algparamtwo} \bm{u}_{j}$.
    \caption{Efficient estimation of $\expecOp\big[\estVal(\tnsnRatesEst(\allocVec),\tnsnRates)\big]$ using importance sampling}
    \label{alg:utilEstImpSamp}
\end{algorithm}

\vspace{5pt}
Algorithm \ref{alg:utilEstImpSamp} produces an efficient expected loss estimate that reduces bias prevalent in high-dimensional settings. 
This algorithm uses the same inputs as Algorithm \ref{alg:utilApproxNodeFast}.
The key extension is the use of an additional set of MCMC draws of SFP-rate vectors to better capture the region of importance for the evaluated sampling plan.
An expected data set, $\dataSetImp$, is combined with known data $\dataSet$ for generating this additional MCMC set, $\impdraws$.
The matrix of binomial likelihoods for simulated data sets, $\binomLikeMat$, is generated using $\impdraws$, instead of $\truthdraws$ as in Algorithm \ref{alg:utilApproxNodeFast}.
The loss estimate under each simulated data set is adjusted by the importance weight before returning the estimate.

Algorithm \ref{alg:utilEstImpSamp} requires one additional parameter as compared with Algorithm \ref{alg:utilApproxNodeFast}:
$\algparamthree$, the number of data sets to generate for forming $\dataSetImp$.
Generating these data sets is computationally inexpensive.
We use $\algparamthree=5,000$ for the loss estimates of the case study, as evaluations showed this value yielded consistent resulting $\dataSetImp$ sets.
Calculating $\dataSetImp$ in this case requires only a few seconds on a standard laptop. 
Thus, this step contributes little to the overall calculation time as compared with MCMC generation, which requires a few minutes for our case study.

The principal disadvantage of Algorithm \ref{alg:utilEstImpSamp} is the additional MCMC generation required, which can be computationally expensive. 
This disadvantage compounds when evaluating multiple plans; the ability to evaluate multiple plans efficiently is a key goal of our work.
Whereas Algorithm \ref{alg:utilApproxNodeFast} can use a single MCMC set of SFP-rate vectors for evaluating any number of plans, Algorithm \ref{alg:utilEstImpSamp} requires additional MCMC generation for each plan.
However, the importance sampling of Algorithm \ref{alg:utilEstImpSamp} addresses a bias issue arising for Algorithm \ref{alg:utilApproxNodeFast} under sufficiently high dimensionality and a given maximal size of the MCMC set.

Future work will aim to discern when to apply each algorithm.
It is not clear how to use supply-chain size and computational limits to determine when importance sampling should be used: the bias of a loss estimate is a function of the set of MCMC draws, the number of nodes in the supply chain, and the underlying SFP rates.
One approach is to calibrate the amount of bias as a function of sampling budget by applying each algorithm at fixed sampling intervals, using a uniform allocation across test nodes. 
Any bias can then be estimated as the difference in loss estimate between the two algorithms.

\subsection{Utility approximations for case study}
Approximations of sampling plan utility depend significantly on the MCMC set of SFP-rate vectors used to generate the approximation.
Application of importance sampling requires new MCMC generation for approximations at each considered sampling budget.
Using different MCMC sets at each budget introduces variance.
We address this variance by re-running Algorithm \ref{alg:utilEstImpSamp} ten times with $\algparamone=75,000$, $\algparamtwo=2,000$, and $\algparamthree=5,000$.
We then compile the generated $\bm{u}_j$ loss values into a single loss vector; 95\% confidence interval values are calculated using this single vector.
\clearpage
\section{Derivation of Bayes estimates for selected scores} \label{SuppMat:bayesEstimates}
Bayes estimates for the scores proposed in Sections \ref{subsubsec:objClass} and \ref{subsubsec:objAssess} can be derived analytically.
These estimates are posterior quantiles that can be readily obtained using a set of MCMC draws.
Using these analytical estimates instead of optimization software can improve computing time.
Propositions \ref{prop:bayes_absdiff} and \ref{prop:bayes_class} show Bayes estimates for these scores.
These propositions consider Bayes estimates of the SFP rate at a single node. 
Our proposed loss construction uses a summation of terms across nodes; thus, the loss pertaining to the estimate for any node can be considered separately from that of other nodes.

\vspace{15pt}
\begin{prop} \label{prop:bayes_absdiff}
    Under the Assessment score, the Bayes estimate for SFP rate $\tnsnRatesOneD$ at any node is $\tnsnRatesEstOneD^\star$, where
    \[
    \mathbbm{P}(\tnsnRatesOneD\leq\tnsnRatesEstOneD^\star)=\frac{\underEstWt}{1+\underEstWt}\,.
    \]
\end{prop}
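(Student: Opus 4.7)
The plan is to recognize the Absolute Difference score in (\ref{eq:scoreAssess}) as the asymmetric ``pinball'' loss from quantile estimation, and then carry out the standard first-order minimization of its posterior expectation. Let $F$ denote the marginal posterior CDF of $\tnsnRates$ at the node in question, and, for ease of exposition, assume $F$ admits a density $f$ (the general case follows by Lebesgue--Stieltjes integration). Writing out the expected score, I would split the integral at the estimate:
\begin{equation*}
\mathbbm{E}[S_2(\tnsnRatesEst,\tnsnRates)] = \int_{-\infty}^{\tnsnRatesEst}(\tnsnRatesEst - t)\,f(t)\,\mathrm{d}t \;+\; \underEstWt\int_{\tnsnRatesEst}^{\infty}(t - \tnsnRatesEst)\,f(t)\,\mathrm{d}t .
\end{equation*}

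Next, I would differentiate with respect to $\tnsnRatesEst$ using Leibniz's rule. The boundary contributions vanish because each integrand is zero at $t=\tnsnRatesEst$, leaving only the terms from differentiating $\tnsnRatesEst$ inside the integrands:
\begin{equation*}
\frac{\mathrm{d}}{\mathrm{d}\tnsnRatesEst}\mathbbm{E}[S_2(\tnsnRatesEst,\tnsnRates)] \;=\; F(\tnsnRatesEst) \;-\; \underEstWt\bigl(1-F(\tnsnRatesEst)\bigr) \;=\; (1+\underEstWt)F(\tnsnRatesEst) - \underEstWt .
\end{equation*}
Setting this derivative to zero and solving gives $F(\tnsnRatesEst^{\star}) = \underEstWt/(1+\underEstWt)$, which is precisely the claimed probability statement. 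To confirm this is a minimum, I would note that the derivative is non-decreasing in $\tnsnRatesEst$ since $F$ is non-decreasing; equivalently, $\mathbbm{E}[S_2(\cdot,\tnsnRates)]$ is convex as a sum of two convex functions in $\tnsnRatesEst$, so the first-order condition identifies the global minimizer.

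The only real subtlety, and the place I would be most careful, is handling posteriors for which $F$ is not continuous or is flat at the target probability level $\underEstWt/(1+\underEstWt)$. This matters in practice because the algorithms of Section \ref{subsec:bayesestimators} and Appendix \ref{SuppMat:calcUtil} work with a finite MCMC sample (and with the weighted empirical distribution encoded by column $\bm{D}_{\cdot j}$), so $F$ is a step function. In that case, I would replace the equation $F(\tnsnRatesEst^\star) = \underEstWt/(1+\underEstWt)$ with the generalized-quantile definition $\tnsnRatesEst^{\star} = \inf\{q : F(q) \geq \underEstWt/(1+\underEstWt)\}$, and verify that the one-sided derivatives of the (piecewise linear) expected score change sign exactly at this value, so the minimum is still attained there. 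This simultaneously validates the analytical indexing approach used when computing Bayes estimates from the MCMC draws and justifies why, in the continuous case, the stated probability condition uniquely pins down $\tnsnRatesEst^{\star}$.
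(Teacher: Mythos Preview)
Your proposal is correct and follows essentially the same route as the paper: write the posterior expected asymmetric absolute-error score as two integrals split at $\tnsnRatesEst$, differentiate under the integral via Leibniz's rule, invoke convexity to certify a global minimum, and set the derivative to zero to obtain $F(\tnsnRatesEst^\star)=\underEstWt/(1+\underEstWt)$. The paper additionally restricts the integration limits to $(0,1)$ since SFP rates live there, while your extra paragraph on the step-function/MCMC case is a helpful practical remark that goes slightly beyond what the paper records.
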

\begin{proof}
Let $\tnsnRatesEstOneD$ be an estimate of SFP rate $\tnsnRatesOneD$, let $\dataSetNext$ be the PMS data available to choose $\tnsnRatesEstOneD$, and let $V(\tnsnRatesEstOneD)$ be the expected loss as a function of $\tnsnRatesEstOneD$.
The Bayes estimate $\tnsnRatesEstOneD^\star$ is the value of $\tnsnRatesEstOneD$ that minimizes $V(\tnsnRatesEstOneD)$.
Consider the expected loss:
    \begin{align*}
        V(\tnsnRatesEstOneD) &= \expecOp\big[(\tnsnRatesEstOneD-\tnsnRatesOneD)^{+}+\underEstWt(\tnsnRatesOneD-\tnsnRatesEstOneD)^{+}\big]\,.
    \end{align*}
The expression inside the expectation is convex, with a slope of 1 as $\tnsnRatesEstOneD$ moves to the right of $\tnsnRatesOneD$ and a slope of $\underEstWt$ as $\tnsnRatesEstOneD$ moves to the left of $\tnsnRatesOneD$.
Expectation preserves convexity; thus, $V$ is convex in $\tnsnRatesEstOneD$ and the value of $\tnsnRatesEstOneD$ where $V$ has a slope of zero is a minimizer of $V$.

Next, identify where $V$ has a slope of zero. 
The Leibniz rule indicates for functions $A$, $B$, and $\phi$ that
\[
    \left(\int_{A(x)}^{B(x)}\phi(x,t)dt \right)' = \int_{A(x)}^{B(x)}\phi'(x,t)dt+\phi\big(x,B(x)\big)B'(x)-\phi\big(x,A(x)\big)A'(x)\,.
\]
Expand the expected loss:
\begin{align*}
        V(\tnsnRatesEstOneD) &= \expecOp\big[(\tnsnRatesEstOneD-\tnsnRatesOneD)^{+}+\underEstWt(\tnsnRatesOneD-\tnsnRatesEstOneD)^{+}\big] \\
        &= \int_{-\infty}^{\tnsnRatesEstOneD} (\tnsnRatesEstOneD-\tnsnRatesOneD)p(\tnsnRatesOneD|\dataSetNext)d\tnsnRatesOneD + \underEstWt\int_{\tnsnRatesEstOneD}^{\infty} (\tnsnRatesOneD-\tnsnRatesEstOneD)p(\tnsnRatesOneD|\dataSetNext)d\tnsnRatesOneD\,.
\end{align*}
SFP rates lie in the $(0, 1)$ interval, so we can change the limits of integration:
\begin{align*}
        V(\tnsnRatesEstOneD) &= \int_{0}^{\tnsnRatesEstOneD} (\tnsnRatesEstOneD-\tnsnRatesOneD)p(\tnsnRatesOneD|\dataSetNext)d\tnsnRatesOneD + \underEstWt\int_{\tnsnRatesEstOneD}^{1} (\tnsnRatesOneD-\tnsnRatesEstOneD)p(\tnsnRatesOneD|\dataSetNext)d\tnsnRatesOneD\,.
\end{align*}
Apply the Leibniz rule to obtain the derivative of $V$ with respect to $\tnsnRatesEstOneD$:
    \begin{align*}
        V'(\tnsnRatesEstOneD) &= \int_{0}^{\tnsnRatesEstOneD} p(\tnsnRatesOneD|\dataSetNext)d\tnsnRatesOneD  - \underEstWt \int_{\tnsnRatesEstOneD}^{1}p(\tnsnRatesOneD|\dataSetNext)d\tnsnRatesOneD\,.
    \end{align*}
Setting $V'(\tnsnRatesEstOneD)$ to zero and solving for $\tnsnRatesEstOneD$ produces Bayes estimate $\tnsnRatesEstOneD^\star$:
    \begin{align*}
        \int_{0}^{\tnsnRatesEstOneD^\star} p(\tnsnRatesOneD|\dataSetNext)d\tnsnRatesOneD - \underEstWt \int_{\tnsnRatesEstOneD^\star}^{1} p(\tnsnRatesOneD|\dataSetNext)d\tnsnRatesOneD &=0 \\
        \mathbbm{P}(\tnsnRatesOneD\leq \tnsnRatesEstOneD^\star) &= \underEstWt\big[1- \mathbbm{P}(\tnsnRatesOneD\leq \tnsnRatesEstOneD^\star)\big] \\
        \mathbbm{P}(\tnsnRatesOneD\leq \tnsnRatesEstOneD^\star) &= \frac{\underEstWt}{1+\underEstWt}
    \end{align*}
\end{proof}

\begin{prop} \label{prop:bayes_class}
    Under the Classification score using threshold $\threshold$, the Bayes estimate for the SFP rate at a single node is 
    \[
    \begin{cases}
         1 \text{ if } \mathbbm{P}(\tnsnRatesOneD\leq \threshold) \leq \frac{\underEstWt}{1+\underEstWt} \text{ and} \\
         0 \text{ o.w.}
     \end{cases}
    \]
\end{prop}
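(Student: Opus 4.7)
The plan is to exploit the fact that the classification score $S_1$ depends on $\tnsnRatesEst$ only through $C(\tnsnRatesEst)$, the indicator of whether $\tnsnRatesEst \geq \threshold$. So the decision problem collapses to a binary choice between $C(\tnsnRatesEst)=0$ and $C(\tnsnRatesEst)=1$, and once that choice is made, any representative value in the corresponding half-interval of $[0,1]$ is loss-equivalent. The endpoints $0$ and $1$ are chosen in the statement, but this is a convention.

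First, I would compute the expected loss under each of the two possible choices of $C(\tnsnRatesEst)$. When $C(\tnsnRatesEst)=1$, the first bracket of $S_1$ is $[1-C(\gamma)]^+ = \mathbbm{1}\{\gamma < \threshold\}$ and the second is zero, so the expected loss is $\mathbbm{P}(\gamma < \threshold)$. When $C(\tnsnRatesEst)=0$, the first bracket is zero and the second is $\underEstWt\,\mathbbm{1}\{\gamma \geq \threshold\}$, yielding expected loss $\underEstWt\,\mathbbm{P}(\gamma \geq \threshold) = \underEstWt\bigl(1-\mathbbm{P}(\gamma \leq \threshold)\bigr)$, where I use the posterior (assumed continuous, so that $\mathbbm{P}(\gamma < \threshold) = \mathbbm{P}(\gamma \leq \threshold)$; the boundary case can be handled by either tie-breaking convention).

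Next, I would set the estimate to the value producing the smaller expected loss. Picking $C(\tnsnRatesEst)=1$ is preferred exactly when
\[
\mathbbm{P}(\gamma \leq \threshold) \leq \underEstWt\bigl(1-\mathbbm{P}(\gamma \leq \threshold)\bigr),
\]
which rearranges to $\mathbbm{P}(\gamma \leq \threshold) \leq \underEstWt/(1+\underEstWt)$. The representative value $\tnsnRatesEst^\star = 1$ satisfies $C(\tnsnRatesEst^\star)=1$, and analogously $\tnsnRatesEst^\star = 0$ satisfies $C(\tnsnRatesEst^\star)=0$ in the complementary case, which matches the stated piecewise form.

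There is no real obstacle here; the main subtlety is recognizing that the Bayes estimate is non-unique (any estimate above $\threshold$ is equally good in the first case, any below in the second), so the proposition must be interpreted as specifying one canonical minimizer. A minor bookkeeping point is the strict-vs.-weak inequality in the definition of $C$; under absolutely continuous posteriors the distinction vanishes, and under a posterior with an atom at $\threshold$ one must be slightly careful about how the boundary is assigned, but this does not affect the final condition.
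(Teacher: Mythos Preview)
Your proposal is correct and follows essentially the same approach as the paper: compute the posterior expected loss for the two classifications $C(\tnsnRatesEst)=1$ and $C(\tnsnRatesEst)=0$, then compare them to obtain the threshold condition $\mathbbm{P}(\gamma\leq\threshold)\leq \underEstWt/(1+\underEstWt)$. Your treatment is in fact slightly more careful than the paper's, since you explicitly note both the non-uniqueness of the minimizer (any value on the correct side of $\threshold$ works) and the strict-versus-weak inequality issue at the boundary.
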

\begin{proof}
    In the Classification setting, a ``1'' denotes nodes classified as significant SFP sources, and a ``0'' denotes nodes classified as non-significant sources. 
    Consider the expected loss, $V(\cdot)$, when classifying a node as 1 or 0:
    \begin{align*}
        V(1) &= \expecOp\Big[\big(1-C(\tnsnRatesOneD)\big)^{+}+\underEstWt\big(C(\tnsnRatesOneD)-1\big)^{+}\Big] \\
        &= \int_{0}^{\threshold} (1) p(\tnsnRatesOneD|\dataSetNext)d\tnsnRatesOneD  + \underEstWt \int_{\threshold}^{1} (0) p(\tnsnRatesOneD|\dataSetNext)d\tnsnRatesOneD\ \\
        &= \int_{0}^{\threshold} p(\tnsnRatesOneD|\dataSetNext)d\tnsnRatesOneD\,,
    \end{align*}
    \begin{align*}
        V(0) &= \expecOp\Big[\big(0-C(\tnsnRatesOneD)\big)^{+}+\underEstWt\big(C(\tnsnRatesOneD)-0\big)^{+}\Big] \\
        &= \int_{0}^{\threshold} (0) p(\tnsnRatesOneD|\dataSetNext)d\tnsnRatesOneD  + \underEstWt \int_{\threshold}^{1} (1) p(\tnsnRatesOneD|\dataSetNext)d\tnsnRatesOneD\ \\
        &= \underEstWt\int_{\threshold}^{1} p(\tnsnRatesOneD|\dataSetNext)d\tnsnRatesOneD\,.
    \end{align*}
    We classify the node as a significant SFP source when the expected loss of doing so is less than the expected loss of the alternative, i.e., when $V(1)\leq V(0)$.
    Consider when $V(1)\leq V(0)$:
    \begin{align*}
         \int_{0}^{\threshold} p(\tnsnRatesOneD|\dataSetNext)d\tnsnRatesOneD &\leq \underEstWt\int_{\threshold}^{1} p(\tnsnRatesOneD|\dataSetNext)d\tnsnRatesOneD \\
         \mathbbm{P}(\tnsnRatesOneD\leq\threshold) &\leq \underEstWt\big[1-\mathbbm{P}(\tnsnRatesOneD\leq\threshold)\big] \\
         \mathbbm{P}(\tnsnRatesOneD\leq\threshold) &\leq \frac{\underEstWt}{1+\underEstWt}\,.
    \end{align*}
    Thus, we classify the node as a significant SFP source when $\mathbbm{P}(\tnsnRatesOneD\leq\threshold) \leq \frac{\underEstWt}{1+\underEstWt}$.
\end{proof}

Observe that these propositions ignore the weight of an SFP rate, $W(\tnsnRates)$. 
Our implementation of utility estimation accounts for the weight alongside the binomial likelihood of the data simulation matrix: both the weight and the likelihood are functions of the SFP rate and not the estimate. 
\clearpage
\section{Inference preparation for case study}
\label{SuppMat:caseStudyPrep}

This appendix describes preparation steps from our case study for using a utility approach to form sampling plans.

We examine eliciting priors on SFP rates.
An appropriate prior has two properties: right skewness to capture the possibility of high SFP rates, and increased variance as assessed risk increases.
The SFP risk assessment process of \citet{nkansah2018} establishes seven risk categories through appraisals of factors like population density and poverty levels.
Adapting these risk assessments yields priors with the desired properties, while avoiding more extensive prior elicitation:
\begin{equation}
    p(\tnsnRates) \propto \prod\limits_{\tnsnPoint\in \tnsnSet } \exp\left\{-\frac{1}{2}\left[\frac{\logitfunc(\tnsnRates_\tnsnPoint)-\logitfunc(\mu_{q(\tnsnPoint)})}{\priorVar}\right]^2\right\} \,.
\end{equation}
Function $h$ is the logit function; the prior is normal in the logit space.
Parameter $\mu_{q(\tnsnPoint)}$ is a regulator-designated SFP-rate median for node $\tnsnPoint$ with assessed SFP risk $q(\tnsnPoint)$.
Our field-user co-authors established $\mu$ values of 1\%, 2\%, 5\%, 10\%, 15\%, 20\% and 25\% for the seven SFP risk levels of \citeauthor{nkansah2018}.
Parameter $\priorVar$ is a variance parameter, used for all assessed SFP risk levels.
We used a $\priorVar$ of $2$ for the case study, which produced densities matching field-user expectations.
These densities are shown in Figure \ref{fig:SFPriskDensities}.
For example, a ``Very Low'' assessment corresponds to a prior 90\% interval of $[1\%, 38\%]$, while a ``High'' assessment corresponds to an interval of $[7\%, 72\%]$.
Use of the logit space provides the desired right skew in the $(0,1)$ space, as well as increased variance as the assessed SFP risk increases.
This variance pattern is achieved through only a single 
choice of $\priorVar$; using only one $\priorVar$ value is desirable as it concisely captures overall confidence in risk assessments.

\begin{figure}[ht]
    \centering
    \includegraphics[width=0.8\textwidth]{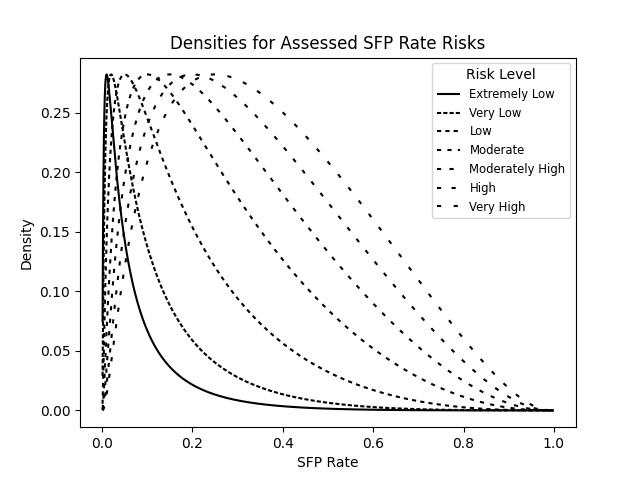}
    \caption[Prior densities for \citet{nkansah2018} risk designations]{Prior densities at SFP rates for seven designations of SFP risk, as introduced in \citet{nkansah2018}, using variance $\priorVar=2$.}
    \label{fig:SFPriskDensities}
\end{figure}

\subsection*{Estimating sourcing probability}
The sourcing-probability matrix for the supply chain, $\sourcingMat$, is a crucial element of understanding what data may result from a proposed sampling plan.
National sourcing records, aggregated records from sub-national or outlet echelons, and track-and-trace records can all be used to form $\sourcingMat$.
Our case study uses supply-chain observations from a previous PMS iteration to estimate $\sourcingMat$, and we use bootstrap samples of these observations to form sourcing probabilities for untested nodes in the all-provinces case.
We used 44 independent draws from the first iteration's data set to form the sourcing probabilities for each test node; 44 was the average number of samples per test node in the first iteration.

The uncertainty associated with an estimated $\sourcingMat$ may be incorporated into analysis through similar bootstrap samples that are integrated with MCMC draws of the SFP-rate vector, $\tnsnRates$.
In the efficient estimation algorithm (see Appendix \ref{SuppMat:calcUtil}), each member of the set of SFP-rate draws would be coupled with a bootstrap sample of $\sourcingMat$ during the data-generation step.
The utility estimate would thus be reflective of the uncertainty in both $\tnsnRates$ and $\sourcingMat$.
Further discussion on sourcing probability is contained in
\if0\blind{\citet{wickett2023} and \citet{wickettdissertation}.}\fi
\if1\blind{\citet{wickettdissertationblind}.}\fi

\end{appendices}


\end{document}